\newcommand{\dst}{\displaystyle}
\newtheorem{theorem}{Theorem}[section]
\newtheorem{proposition}[theorem]{Proposition}
\newtheorem{corollary}[theorem]{Corollary}
\newtheorem{lemma}[theorem]{Lemma}
\newtheorem{conjecture}[theorem]{Conjecture}
\newtheorem{example}[theorem]{Example}
\theoremstyle{definition}
\newtheorem{definition}[theorem]{Definition}
\newenvironment{prf}{\noindent{\bf Proof.~}}{\(\qed\)}
\newcommand{\BPF}{\begin{prf}} 
\newcommand {\EPF}{\end{prf}}
\def\spn{\textsf{span}}
\def\char{\textnormal{char}}
\newcommand{\Sym}{\textnormal{Sym}}
\def\poly{\textnormal{poly}}
\newcommand{\rk}{\textnormal{rank}}
\def\hrk{\textnormal{hom-rank}}
\def\smrk{\textnormal{sm-rank}}
\def\wrk{\textnormal{w-rank}}
\newcommand{\mat}{\textnormal{Mat}}
\newcommand{\ten}{\textnormal{Ten}}
\def\tennd{\ten_{n,d}(\mathbb{F})}
\newcommand{\coeff}{\textnormal{coeff}}
\def\F{{\mathbb{F}}}
\def\N{{\mathbb{N}}}
\def\R{{\mathbb{R}}}
\def\C{{\mathbb{C}}}
\def\cC{{\mathcal C}}
\def\cL{{\mathcal L}}
\def\cM{\mathcal M}
\def\cP{\textsf{P}}
\def\cNP{\textsf{NP}}
\def\cVP{\textsf{VP}}
\def\cVNP{\textsf{VNP}}
\def\ba{{\mathbf a}}
\def\bb{{\mathbf b}}
\def\be{{\mathbf e}}
\def\bf{{\mathbf f}}
\def\bg{{\mathbf g}}
\def\bo{{\mathbf 0}}
\def\bp{{\mathbf p}}
\def\bq{{\mathbf q}}
\def\bu{{\mathbf u}}
\def\bv{{\mathbf v}}
\def\bx{{\mathbf x}}
\def\by{{\mathbf y}}
\def\bz{{\mathbf z}}
\newcommand{\wt}[1]{\widetilde{#1}}
\def\D{{\partial}}
\def\then{\Rightarrow}
\def\to{\rightarrow}
\newcommand{\eps}{\epsilon}
\title{Barriers for Rank Methods in Arithmetic Complexity}
\author{
Klim Efremenko \thanks{Ben Gurion University, email: klimefrem@gmail.com.}
\and
Ankit Garg \thanks{Microsoft Research New England, email: garga@microsoft.com.}
\and
Rafael Oliveira \thanks{Department of Computer Science, University of Toronto, 
email: rafael@cs.toronto.edu.}
\and
Avi Wigderson \thanks{Institute for Advanced Study, Princeton, email: avi@math.ias.edu.}
}
\begin{document}

\maketitle

\vspace{-15pt}

\begin{abstract}

{\em Arithmetic complexity}, the study of the cost of computing polynomials via additions and multiplications, is considered (for many good reasons) simpler to understand than {\em Boolean complexity}, namely computing Boolean functions via logical gates. And indeed, we seem to have significantly more lower bound techniques and results in arithmetic complexity than in Boolean complexity. Despite many successes and rapid progress, however, foundational challenges, like proving super-polynomial lower bounds on circuit or formula size for explicit polynomials, or super-linear lower bounds on explicit 3-dimensional tensors, remain elusive.

At the same time (and possibly for similar reasons), we have plenty more excuses, in the form of ``barrier results'' for failing to prove basic lower bounds in Boolean complexity than in arithmetic complexity. Efforts to find barriers to arithmetic lower bound techniques seem harder, and despite some attempts we have no excuses of similar quality for these failures in arithmetic complexity. This paper aims to add to this study.

In this paper we address {\em rank methods}, which were long recognized as encompassing and abstracting almost all known arithmetic lower bounds to-date, including the most recent impressive successes. Rank methods (under the name of {\em flattenings}) are also in wide use in algebraic geometry for proving tensor rank and symmetric tensor rank lower bounds. Our main results are barriers to these methods. In particular, 
\begin{itemize}
\item Rank methods {\em cannot} prove better than $\Omega_d  (n^{\lfloor d/2 \rfloor})$ lower bound on the tensor rank of {\em any} $d$-dimensional tensor of side $n$. (In particular, they cannot prove super-linear, indeed even $>8n$ tensor rank lower bounds for {\em any} 3-dimensional tensors.)
\item Rank methods {\em cannot} prove $\Omega_d  (n^{\lfloor d/2 \rfloor})$ on the {\em Waring rank}\footnote{A very restricted form of depth-3 circuits}  of any $n$-variate polynomial of degree $d$. (In particular, they cannot prove such lower bounds on stronger models, including depth-3 circuits.)
\end{itemize}

The proofs of these bounds use simple linear-algebraic arguments, leveraging connections between the {\em symbolic} rank of matrix polynomials and the usual rank of their evaluations. These techniques can perhaps be extended  to barriers for other arithmetic models on which progress has halted. 

To see how these barrier results directly inform the state-of-art in arithmetic complexity we note the following.
First, the bounds above nearly match the best explicit bounds we know for these models, hence offer an explanations why the rank methods got stuck there. Second, the bounds above are  a far cry (quadratically away) from the true complexity (e.g. of random polynomials) in these models, which {\em if} achieved (by any methods), are known to imply super-polynomial formula lower bounds.

We also explain the relation of our barrier results to other attempts, and in particular how they significantly differ from the recent attempts to find analogues of ``natural proofs'' for arithmetic complexity. Finally, we discuss the few arithmetic lower bound approaches which fall outside rank methods, and some natural directions our barriers suggest.

\end{abstract}

\section{Introduction}\label{sec:intro}

Arithmetic complexity theory (often also called algebraic complexity theory) addresses the computation of algebraic objects (like polynomials, matrices, tensors) using the arithmetic field operations (and sometimes other operations like taking roots). Within computational complexity this field is nearly as old as Boolean complexity theory, which addresses the computation of discrete functions via logical operations, but of course mathematicians were interested in arithmetic computation for centuries before computer science was born. Indeed, Euclid's algorithm for computing GCD, Gauss' discovery of the FFT, and Abel's impossibility result for solving quintic equations by radicals are all precursors of arithmetic complexity theory. Today algebraic algorithms pervade mathematics! Extensive surveys of this field are presented in the books~\cite{BCS13,Gat13}, and, more focused on the present material are the  recent monographs~\cite{SY10,CKW11}, as well as the book~\cite{Lan17} which offers an algebro-geometric perspective.

Structurally, the Boolean and arithmetic theories, and especially the quest for lower bounds which we will focus on, progressed almost hand in hand. Shortly after the important discoveries of reductions and completeness leading to the definitions of $\cP$, $\cNP$, and complete problems for them, Valiant~\cite{V79} developed the arithmetic analog notions of $\cVP$, $\cVNP$ and complete problems for them. Separating these pairs of classes stand as the long-term challenges of these fields, and their difficulty has led to the study of a large variety of restricted models in both. Definitions, techniques and results have propagated back and forth and inspired progress, but, all in all, we understand the arithmetic models much better. This of course comes as no surprise. In the arithmetic setting (especially over fields that are large, of characteristic zero, or are algebraically closed) the diverse tools of algebra are available, but have no analogs in the Boolean setting. Moreover, as arithmetic computation is mostly {\em symbolic} it is (essentially) more stringent than the Boolean computation of functions\footnote{For example, the {\em polynomial} $x^p-x$ over $\F_p$ is nontrivial to compute, while the (identically zero) function it represents is trivial.}; indeed, it is known that proving (a non-uniform version of)  $\cP \neq \cNP$ implies 
$\cVP \neq \cVNP$ when the underlying field is $\C$~\cite{B13}. 
and thus arithmetic lower bounds are also formally easier to prove!

Despite exciting and impressive progress on arithmetic lower bounds (we will detail many later), some of the most basic questions remain open, and this seeming weakness of current techniques begs explanation, which will hopefully lead to new ones. In Boolean complexity there is a rich interplay between the discovery of the power of new techniques, and then their limitations, in the form of {\em barrier results}. Such results formally encapsulate a set of lower bound methods, and then prove (unconditional, or sometimes conditionally on natural assumptions) that these cannot solve basic questions. Well known barriers to large classes of techniques include the {\em relativization} barrier of Baker, Gill and Solovay~\cite{BGS75}, the {\em natural proof} barrier of Razborov and Rudich~\cite{RR94} and the {\em algebrization} barrier of Aaronson and Wigderson~\cite{AW09}. But there are many other important barriers, to more concrete lower bound methods, including~\cite{Rsub,Rapprox,Pot16}. 
Finding analogous barriers for arithmetic complexity has been much harder; while encapsulation of general lower bound techniques exists, e.g. in~\cite{G15,FSV17, GKSS17}, there are really no proofs of their limitations (we will discuss these in the related works subsection below). 

This paper provides, to the best of our knowledge, the first unconditional barrier results on a very general class of methods, capturing many of the known lower bounds, including the very exciting recent ones.
We now begin to describe, through examples, the techniques we encompass under {\em rank methods} and then explain their limitations.

\subsection{Sub-Additive Measures, Rank Bounds and Barriers}

Throughout, we will discuss the computation of multivariate polynomials over any field, by arithmetic circuits of various forms, in a way that will not necessitate too many specific details; we will give these as needed, and give formal details in the technical sections. The examples we start with below will demonstrate  many ``cheap'' computations may be encompassed by writing the output polynomial as a ``short'' sum of {\em simpler} ones. Thus lower bounds on the number of summands can yield (important) complexity lower bounds. We continue with discussing classes of such lower bound techniques, and then barrier results that put a limit on how large lower bounds such classes of techniques can prove.

\paragraph{Sub-additive measures}
Let us start with some examples and then generalize them.
\begin{itemize}
\item
One of the earliest basic results in arithmetic complexity, due to Hyafil~\cite{Hya79} states the following: if a homogeneous circuit of size $s$ computes an $n$-variate polynomial $f$ of degree $d$, then
$$ f = g_1 + g_2 + \dots + g_s$$
where each $g_i$ is {\em simple}, which here means {\em highly reducible}: 
$g_i = p_i \cdot q_i$, where the degrees of $p_i,q_i$ do not exceed $2d/3$. This result was developed towards parallelizing arithmetic computation, but can also be used for lower bounds: if we could find any sub-additive measure $\mu$ on polynomials, which is small on all possible $g_i$ but is large on $f$, we would have a lower bound on the minimum circuit size $s$ of $f$! In particular, Hyafil's theorem implies that if the ratio of ``large'' and ``small'' values of $\mu$ is super-polynomial in $n,d$, this would imply\footnote{Since homogenous computation can efficiently simulate non-homogeneous one.}  $\cVP \neq \cVNP$! We note that Hyafil's theorem is today only one example of numerous other decomposition theorems of similar nature used in lower bounds, e.g.~\cite{N91,NW96,RY09,HWY11} to mention a few.

\item
An even simpler example, where a similar decomposition follows directly from the definition, is tensor rank. 
Assume that a $d$-dimensional tensor (with $n$ variables in each dimension) has rank $s$. This means\footnote{Directly generalizing matrix rank, which is the case  $d=2$.} that $$ f = g_1 + g_2 + \dots + g_s$$
where each $g_i$ is {\em simple}, which here means {\em of rank 1}: 
$g_i = \ell_i^{(1)} \otimes \ell_i^{(2)} \dots \otimes \ell_i^{(d)}$, where $\ell_i^{(j)}$ is a linear form in the variables of dimension $j$. Again, any sub-additive measure $\mu$ on tensors which is small on all possible rank 1 tensors $g_i$, but is large on $f$ would yield a lower bound on its tensor rank. This question is no less important than the previous one even though tensor rank seems like a more restricted complexity measure: Raz~\cite{Raz10} proved that presenting an explicit tensor $f$ of super-constant dimension $d \leq \log n / \log\log n$, with a nearly-tight tensor rank lower bound of $n^{d(1-o(1))}$ (which holds for most tensors) will imply $\cVP_e \neq \cVNP$ (namely, explicit super-polynomial lower bounds on formulas)! We note that a similar example as tensor rank, where a decomposition suggests itself by definition, is Waring rank, where each $g_i$ is a $d$-power of a linear form.
\item
A third set of examples which directly gives such decompositions of computations is when considering bounded-depth circuits. In almost all computations one can assume without loss of generality that the top (output) gate is a plus gate, and so if a polynomial $f$ is computed by a depth-$h$ circuit of size $s$, then
$$ f = g_1 + g_2 + \dots + g_s$$
where each $g_i$ is {\em simple} in being of depth $h-1$ (and moreover, with a top product gate). Sub-additive measures small on such simple polynomials and large on $f$ were the key to the many successes on remarkably tight lower bounds for depth-3 and then 
depth-4 circuits~\cite{NW96, K12, GKKS14, KLSS14, FLMS15, KS14, KS15}. These include the breakthrough of $(nd)^{\sqrt{d}}$ explicit lower bounds~\cite{GKKS14} on the size of homogeneous depth-4 circuits, which again seem much more restricted than it is: any super-constant improvement of the exponent will imply $\cVP \neq \cVNP$!
\end{itemize}

There are many other examples in which obtaining such decompositions as above uses extra tools like approximations, random restrictions, or iterations.
Abstracting all these examples and indeed most known lower bounds in arithmetic complexity\footnote{The discussion below is quite general and indeed applies to lower bounds and barriers that use sub-additive measures in practically any computational model.}, can be done in a simple way. Let $S$ be a set of {\em simple} polynomials, and let $\hat S$ be their linear span. The $S$-complexity $c_S(f)$ of a polynomial $f\in \hat S$ is simply the smallest number $s$ such that $ f = g_1 + g_2 + \dots + g_s$ and each $g_i \in S$. 
A {\em sub-additive} measure $\mu$ is a function $\mu: \hat S \rightarrow \R^+$ such that $$\mu(g+h) \leq \mu(g)+\mu(h)$$ for any $g,h \in \hat S$. Extending $\mu$ to sets, denoting $\mu(T) = \max \{\mu(g) \,: \, g\in T \}$, we can immediately derive a lower bound on $c_S(f)$ for any polynomial $f$ by
$$c_S(f) \geq \mu(f)/\mu(S).$$ 

Let $\Delta_S$ denote all possible sub-additive measures on $\hat S$.
It is a triviality that $c_S$ itself is a sub-additive measure in $\Delta_S$, and hence this method can in principle provide tight lower bound on the complexity $c_S(f)$ for every $f$. However, the difficulty of proving lower bounds precisely means that $c_S$ is hard to understand, and so we try to ``approximate it'' with simpler measures $\mu \in \Delta$ for some family $\Delta \subseteq \Delta_S$ of sub-additive measures which are hopefully simpler to understand, compute and reason about. 

\paragraph{Barriers for sub-additive measures}

This brings us to the topic of this paper: barriers, or limits to the power of such class of lower bound methods. A {\em barrier} result for any such class of sub-additive measures $\Delta \subseteq \Delta_S$ simply asserts that $\mu(f)$ is {\em small} for {\em every} 
$\mu \in \Delta$ and any $f \in \hat S$ (whenever $\mu(S)$ is small). The quantity
$$ c(\Delta) = \mu(\hat S)/\mu(S) $$
upper bounds the best lower bound which can be proven using {\em any} $\mu \in \Delta$ on {\em any} polynomial $f \in \hat S$, simply as $\mu(f) \leq c(\Delta) \cdot \mu(S)$ 
for all of them.

Of course, concrete lower bounds are obtained using specific measures $\mu$, and there is always hope that a clever variant of such a choice will give even better bounds; indeed, much of the progress in lower bounds is of this nature. The quality of barrier result is in classifying as large as possible a class of measures $\Delta$, which captures many complexity measures, such that either $c(\Delta)$ is close to the best known lower bounds, or it is well separated with a ``desired'' lower bound (e.g. one that would approach the complexity of a random polynomial, or that would significantly improve the state of art). In this paper we focus on {\em rank methods}, which we turn to describe now.

\paragraph{Rank methods}

The rank function of matrices, is at once extremely well studied and understood in linear algebra, and is sub-additive. This has made numerous (implicit and explicit) choices of sub-additive measures, for a variety of computational models, to be defined via matrix rank, as follows. Fix a field $\F$,  and let $\mat_m (\F)$  denote the set of all $m \times m$ matrices over $\F$. Fix the set of (simple) polynomials $S$, (and thereby also their span $\hat S$) as before. Define the class $\Delta_0^S \subseteq \Delta_S$ to be the set of sub-additive measures $\mu$ which arise in the following way. Let $L: S \rightarrow \mat_m(\F)$ be any {\em linear} map for some integer $m$. Namely, for all $g,h \in S$ (and hence also in $\hat S$) we have $L(g+h)=L(g)+L(h)$, and that $L(bg)=L(g)$ for any non-zero constant $b\in \F$. Define $$\mu_L(f) = rank_\F(L(f)).$$ Clearly, all these $\mu_L \in \Delta_0^S$ are sub-additive measures on $S$. We call the elements of $\Delta_0^S$ as {\em rank methods} for $S$.

As mentioned, rank methods abound in arithmetic (and other) lower bounds. The possibly familiar names including {\em partial derivatives, shifted partial derivatives, evaluation dimension, coefficient dimension} which are used e.g. in these lower bounds for monotone, non-commutative, homogeneous, multilinear, bounded-depth and other 
models~\cite{N91, S93, Rsub, NW96, K12, GKKS14, KLSS14, FSS14, FLMS15, KS14, KS15} 
are all rank methods, and in many of these papers are explicitly stated as such. Moreover, in algebraic geometry, rank methods (usually called {\em flattenings}) are responsible for almost all tensor rank and symmetric tensor rank lower bounds (see e.g.~\cite{Lan17}).

What should be stressed is that rank methods are extremely general. We do not restrict the size $m$ of matrices used in any way (and indeed in some applications, like shifted partial derivatives~\cite{GKKS14}, $m$ grows super exponentially in the basic size parameters $n,d$). Moreover, we demand no explicitness in the specification of the linear map $L$ (and indeed, in some applications, like the multilinear formula lower bounds in~\cite{R09,RY09} the map is chosen at random). The barrier results hold for all.

We prove barrier results for two classes of very weak computational models, {\em tensor rank} and {\em Waring rank}, which are very special cases of (respectively)  multilinear and homogeneous depth-3  circuits (which themselves are the weakest class of circuits studied\footnote{As depth-2 circuits simply represent polynomials trivially, as sums of monomials.}. As with all barrier results, the weaker the model for which they are proved, the better, as they scale up for stronger models automatically! As discussed above, we will compare our barriers both to the state-of-art lower bounds in these models, as well to the best one can hope for, namely the complexity of random polynomials. 

\subsection{Main results}
Our results below hold for all large enough fields $\F$ (polynomial in $n,d$). We start with tensor rank, and proceed with Waring rank, which may be viewed as a symmetric version of tensor rank. In both cases, our barrier results nearly match (up to a function of $d$, the degree\footnote{Which is a constant in the very interesting cases where the degree $d$ is a constant!}) the best explicit lower bounds (obtained by rank methods), and are roughly quadratically away from the (desired) lower bounds that hold for random polynomials.

\paragraph{Tensor rank}

Tensors abound in mathematics and physics, and have been studied for centuries. We refer the reader to the book~\cite{L12} for one good survey. From a computational perspective tensors have been extremely interesting as well, as many problems naturally present themselves in tensor form. In arithmetic complexity they are often called {\em set-multilinear} polynomials. While $2$-dimensional tensors, namely matrices, are very well understood, $d$-dimensional tensors possess far less structure, and one way this is manifested is that the problem of computing tensor rank of $3$-dimensional tensors is already $\cNP$-complete~\cite{H90}. Many special cases, approximations and related decompositions of tensors were studied, especially recently with machine learning 
applications~\cite{C96, MR06, AFH12, HK13, GM17}. 
Let us define the model and problem formally. 

Fix $n,d$. The family of polynomials of interest here is $\hat S = \ten_{n,d}(\F)$, namely degree $d$ polynomials in $d$ sets of $n$ variables (so, total of $nd$ variables), in which each monomial has precisely one variable from each set. The coefficients of  a tensor are naturally described by an $[n]^d$ box with entries from $\F$.
The simple polynomials $S$ are {\em rank-1} tensors, namely those which are products of $d$ linear forms, one in each set of variables (equivalently, the coefficients are described by the tensor product of $d$ vectors). The tensor rank of a tensor $f$ is the smallest number of rank-1 tensors which add up to it.

Most tensors have rank about $n^{d-1}/d$. Explicit lower bounds are way worse.
It is trivial to construct an explicit $d$-dimensional tensor  of rank 
$n^{\lfloor d/2 \rfloor}$, and the best known lower bound is only a factor of 
2 larger. Specifically,  
\cite{AFT11} give an explicit tensor with 0,1 coefficients of tensor rank at least 
$2n^{\lfloor d/2 \rfloor} + n - d\log n$.  Note in particular that the best lower bound for 
$d=3$ is about $3n$. Although the lower bounds of~\cite{AFT11} are not attained via a rank method, many other lower bounds for tensor rank are attained via a rank method in 
$\Delta_0^T$ ($T$ for Tensor), namely using a sub-additive measure in the class of 
rank methods~\cite{LO15, L15}. Our barrier result proves that no bound better 
$2^d \cdot n^{\lfloor d/2 \rfloor}$ can be proven by rank methods, and in particular for $d=3$, they cannot beat $8n$ (a factor $8/3$ away from the best explicit lower bound!). 

\begin{theorem}[Statement of Theorem~\ref{thm:tensor-rank}] 
$c(\Delta_0^T) \leq 2^d \cdot n^{\lfloor d/2 \rfloor}$. 
\end{theorem}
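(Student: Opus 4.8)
The plan is to bound $c(\Delta_0^T) = \mu_L(\hat S)/\mu_L(S)$ uniformly over all linear maps $L:\ten_{n,d}(\F)\to\mat_m(\F)$, by showing that for \emph{every} such $L$ and every tensor $f$, we have $\rk_\F(L(f)) \le 2^d\,n^{\lfloor d/2\rfloor}\cdot \rk_\F(L(S))$, where $\rk_\F(L(S))$ is the maximum rank of $L$ on a rank-one tensor. The key device is to pass from the numeric rank of a single matrix $L(f)$ to the \emph{symbolic} rank of a matrix of polynomials. Introduce fresh commuting indeterminates, one ``column of variables'' $\bx^{(j)} = (x^{(j)}_1,\dots,x^{(j)}_n)$ per dimension $j\in[d]$; then the generic tensor is $X = \sum_{i_1,\dots,i_d} x^{(1)}_{i_1}\cdots x^{(d)}_{i_d}\, e_{i_1}\otimes\cdots\otimes e_{i_d}$, and $M(\bx):=L(X)$ is an $m\times m$ matrix whose entries are set-multilinear polynomials of degree $d$. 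By linearity of $L$, for any concrete tensor $f$ the matrix $L(f)$ is the substitution of $f$'s coefficient structure into $M$; more usefully, $L(f)$ is an $\F$-linear image of the evaluations of $M(\bx)$, so $\rk_\F(L(f)) \le \max_{\ba}\rk_\F(M(\ba))$ over all substitutions $\ba$ of field elements for $\bx$. Over a large enough field this maximum equals the \emph{symbolic} rank $\rk_{\F(\bx)}(M(\bx))$ — the rank over the field of rational functions — so it suffices to upper bound that symbolic rank by $2^d\,n^{\lfloor d/2\rfloor}\cdot \rk_\F(L(S))$.

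Next I would estimate $\rk_{\F(\bx)}(M(\bx))$ structurally. Split the $d$ index-sets into two halves, say $A$ of size $\lceil d/2\rceil$ and $B$ of size $\lfloor d/2\rfloor$ (the smaller half is what controls the exponent). The point is that a rank-one tensor $\ell^{(1)}\otimes\cdots\otimes\ell^{(d)}$ is obtained by substituting, for the variables in dimension $j$, the coefficients of the linear form $\ell^{(j)}$; the generic rank-one tensor over the $A$-dimensions is parametrized by the $A$-variables, and likewise over $B$. Writing the generic set-multilinear $X$ as a sum over the $n^{\lfloor d/2\rfloor}$ multi-indices $\bi_B\in[n]^B$, $X = \sum_{\bi_B} \big(\prod_{j\in B} x^{(j)}_{i_j}\big)\cdot X_{\bi_B}(\bx_A)$, where each $X_{\bi_B}$ is a generic tensor in only the $A$-variables (a ``slice''), we get by linearity of $L$ that $M(\bx) = \sum_{\bi_B}\big(\prod_{j\in B} x^{(j)}_{i_j}\big)\, L(X_{\bi_B})$. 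Each $L(X_{\bi_B})$ is $L$ applied to a generic $|A|$-dimensional tensor in the $A$-variables; I claim $\rk_{\F(\bx_A)}\big(L(X_{\bi_B})\big) \le 2^{|A|}\,\rk_\F(L(S))$, by a symmetric recursion on dimension — a generic $|A|$-tensor splits again over its two halves, and the base case $|A|=1$ is immediate since a generic $1$-tensor $\sum_i x_i e_i$ is itself a (generic) rank-one tensor, so $L$ of it has rank at most $\rk_\F(L(S))$ by the very definition of $\rk_\F(L(S))$. Summing the subadditivity of symbolic rank over the $n^{\lfloor d/2\rfloor}$ terms $\bi_B$ gives $\rk_{\F(\bx)}(M(\bx)) \le n^{\lfloor d/2\rfloor}\cdot 2^{\lceil d/2\rceil}\cdot\rk_\F(L(S))$; unfolding the recursion on the $A$-side into the two-sided split yields the stated $2^d\,n^{\lfloor d/2\rfloor}$ (the factor $2^d$ absorbing both halves of the recursion, the constants from odd $d$, and the slack in bounding $\rk_\F(L(S))$ by its max over all of $S$).

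Then I would close the loop: for any fixed target tensor $f$, $\rk_\F(L(f)) \le \rk_{\F(\bx)}(M(\bx)) \le 2^d\,n^{\lfloor d/2\rfloor}\cdot\rk_\F(L(S))$ — the first inequality because $L(f)$ is one evaluation (or $\F$-linear combination of evaluations) of $M(\bx)$ and numeric rank never exceeds symbolic rank — so $\mu_L(f)/\mu_L(S)\le 2^d\,n^{\lfloor d/2\rfloor}$, and taking the supremum over $L$ gives $c(\Delta_0^T)\le 2^d\,n^{\lfloor d/2\rfloor}$.

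I expect the main obstacle to be the passage from the numeric rank of $L(f)$ to the symbolic rank of $M(\bx)$, i.e.\ justifying that $\rk_\F(L(f))$ for an \emph{arbitrary} target $f$ is dominated by a quantity that ``sees only rank-one tensors.'' The subtlety is that $L$ is defined only on $S$ (rank-one tensors) and extended linearly to $\hat S$; one must check the extension is well-defined (rank-one tensors span $\hat S$ but are far from a basis, so $L$ on $S$ already carries nontrivial linear relations), and then argue that slicing $X$ into generic lower-dimensional tensors in a subset of variables keeps each slice inside the span of rank-one tensors over those variables, so that $L$ remains meaningfully applicable and the recursion's base case genuinely invokes the defining bound $\rk_\F(L(S))$. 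A secondary technical point is the ``large field'' hypothesis: identifying the max numeric rank over substitutions with the symbolic rank requires $|\F|$ to exceed the degrees of the minors involved, which here are polynomial in $n,d$ — matching the hypothesis stated in the paper.
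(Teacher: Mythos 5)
Your proposal has a genuine gap at exactly the step you flag as the ``main obstacle,'' and unfortunately it is not a technical subtlety to be patched but the core of the theorem.

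You claim that since $L(f)$ is an $\F$-linear combination of evaluations $M(\ba)$ of the symbolic matrix $M(\bx) = L(\bx_1 \otimes \cdots \otimes \bx_d)$, one has $\rk_\F(L(f)) \le \max_{\ba}\rk_\F(M(\ba)) = \rk_{\F(\bx)}(M(\bx))$. This inference is false: rank is sub-additive, not bounded by the maximum over summands. Indeed, if it were true, then combined with Lemma~\ref{lem:generic-rank} (which gives $\rk_{\F(\bx)}(M(\bx)) \le r$ directly, since \emph{every} evaluation of $M$ is $L$ applied to a rank-one tensor) you would conclude $\rk(L(f)) \le r$ for all $f$ --- i.e.\ no rank method could ever certify tensor rank greater than $1$, contradicting the fact that flattenings easily give $\Omega(n)$ for $3$-tensors. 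The rest of your argument then spends effort bounding the symbolic rank of $M(\bx)$ by $2^d n^{\lfloor d/2\rfloor} r$ via slicing, which is both unnecessary (the symbolic rank is already $\le r$) and beside the point, since symbolic rank is not the quantity that controls $\rk(L(f))$.

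The paper's route is structurally different. One first establishes $\rk_{\F(\bx)}(M(\bx)) \le r$ (Lemma~\ref{lem:generic-rank}), exactly as you observe. The substantive content is then to convert this small \emph{symbolic} rank into a small \emph{coefficient-space} rank: the set-multilinear decomposition lemma (Lemma~\ref{lem:poly-rank-set-multilinear}) shows $M(\bx) = \sum_{i=1}^R \bf_i(\bx) \otimes \bg_i(\bx)$ with $R \le r \cdot 2^d$, where each pair $\bf_i, \bg_i$ is set-multilinear over complementary subsets of the $d$ coordinate blocks. Because one of the two always has degree $\le \lfloor d/2\rfloor$, Corollary~\ref{cor:dim-hom-sm-ten} bounds $\rk(\cC(\bf_i \otimes \bg_i)) \le n^{\lfloor d/2\rfloor}$, and summing gives $\rk(\cC(M(\bx))) \le r \cdot 2^d \cdot n^{\lfloor d/2\rfloor}$. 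The final observation is that $L(f) \in \cC(M(\bx))$ for every $f \in \ten_{n,d}(\F)$ --- since $L(f)$ lies in the span of the matrices $L(\text{rank-one})$, which are the evaluations of $M$, which live in the span of its coefficient matrices. So the correct chain is $\rk(L(f)) \le \rk(\cC(M(\bx)))$, not $\rk(L(f)) \le \rk_{\F(\bx)}(M(\bx))$, and bounding the coefficient-space rank is where the work lies; your proposal never bounds it.
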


\paragraph{Waring rank}

The Waring problem has a long history in mathematics, first in its number theoretic form initiated by Waring~\cite{War70} in 1770 (writing integers as short sums of $d$-powers of other integers), and then in its algebraic form we care about, initiated by Sylvester~\cite{Syl51} in 1851 (writing polynomials as short sums of powers of linear forms). Some of the basic questions (computing this minimum for monomials and for random polynomials) were only very recently resolved, using algebraic geometric techniques~\cite{CCG12,AH95}. In arithmetic complexity this model is often referred to as {\em depth-3 powering circuits}. Let us formalize the problem.

Fix $n,d$.  The family of polynomials of interest here is $\hat S = \poly_{n,d}$, all $n$-variate polynomials of total degree $d$. The simple generating set $S$ we care about here is the set of all $d$-powers, namely
all polynomials of the form $\ell^d$, where $\ell$ is an affine function in the $n$ given variables. So, $c_S(f)$ is the smallest number $s$ such that $f$ can be written as a sum of such $d$ powers.  

For most polynomials, the Waring rank was settled by~\cite{AH95}, and is about $(n-1)^{d}$ 
for $d$ much smaller than $n$, and is precisely 
$$ \left\lceil \frac{1}{n} \cdot \binom{n+d-1}{n-1} \right\rceil. $$  
It is trivial to find an explicit $f\in \poly_{n,d}$ whose Waring rank is 
$\Omega(n^{\lfloor d/2 \rfloor})$, and  the best known lower bound, due to~\cite{GL17} 
(again via rank method in $\Delta_0^W$), is only a little better, 
$$ \binom{n+ \lfloor d/2 \rfloor - 1}{\lfloor d/2 \rfloor} + \lfloor n/2 \rfloor -1. $$
Our barrier result proves that rank methods cannot improve this lower bound even by a 
factor of roughly  $d$.

\begin{theorem}[Statement of Theorem~\ref{thm:waring-rank}] 
$c(\Delta_0^W) \leq (d+1) \cdot \binom{n+ \lfloor d/2 \rfloor}{n}. $
\end{theorem}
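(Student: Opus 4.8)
The plan is to fix an arbitrary $\mu_L\in\Delta_0^W$, given by an integer $m$ and a linear map $L\colon\poly_{n,d}\to\mat_m(\F)$, and to prove $\rank_\F L(f)\le (d+1)\binom{n+\lfloor d/2\rfloor}{n}\cdot r$ for all $f\in\poly_{n,d}$, where $r:=\mu_L(S)=\max_\ell\rank_\F L(\ell^{d})$, the maximum being over affine forms $\ell$; this is exactly $c(\Delta_0^W)\le(d+1)\binom{n+\lfloor d/2\rfloor}{n}$. Write $e=\lfloor d/2\rfloor$ and $e'=\lceil d/2\rceil$, so $e+e'=d$. The first step is to replace evaluations by a symbolic rank. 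With indeterminates $\ba=(a_0,\dots,a_n)$ set $\ell_\ba=a_0+a_1x_1+\cdots+a_nx_n$ and $M(\ba)=L(\ell_\ba^{d})\in\mat_m(\F[\ba])$, whose entries are forms of degree $d$ in $\ba$. Non-vanishing of a fixed $k\times k$ minor of $M$ is a single polynomial condition on $\ba$, so since $\F$ is large enough, Schwartz--Zippel gives $\rank_{\F(\ba)}M(\ba)=\max_{\ba\in\F^{n+1}}\rank_\F M(\ba)=r$. We will use such ``symbolic rank $=$ maximal evaluation rank'' translations repeatedly.

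The factor $d+1$ comes from a one-parameter interpolation on the Veronese. For affine forms $\ell,m$ and $0\le j\le d$, expand $L\big((\sigma\ell+\tau m)^{d}\big)=\sum_{k=0}^{d}\binom{d}{k}\sigma^{k}\tau^{d-k}L(\ell^{k}m^{d-k})$, a homogeneous degree-$d$ matrix polynomial in $(\sigma,\tau)$; interpolating at $d+1$ distinct projective points $(\sigma_s:\tau_s)$ writes $L(\ell^{j}m^{d-j})$ as an $\F$-linear combination of the $d+1$ matrices $L\big((\sigma_s\ell+\tau_s m)^{d}\big)$, each of rank $\le r$ since $\sigma_s\ell+\tau_s m$ is again affine. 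Hence $\rank_\F L(\ell^{j}m^{d-j})\le(d+1)r$ for all affine $\ell,m$ and all $j$; running the same computation with indeterminate coefficients, the ``decoupled'' matrix polynomial $N(\ba,\bc):=L(\ell_\ba^{e}\ell_\bc^{e'})$ satisfies $\rank_{\F(\ba,\bc)}N(\ba,\bc)\le(d+1)r$, while $N(\ba,\ba)=M(\ba)$.

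The factor $\binom{n+e}{n}=\dim\poly_{n,e}$ should come from the flattening $\ell^{d}=\ell^{e}\cdot\ell^{e'}$. Surjectivity of the multiplication map $\poly_{n,e}\otimes\poly_{n,e'}\to\poly_{n,d}$ together with the fact that $d$-th powers span $\poly_{n,d}$ lets us fix a basis $\{\ell_k^{e}\}_{k=1}^{K}$ of $\poly_{n,e}$, $K=\binom{n+e}{n}$, consisting of $e$-th powers of affine forms, and write any $f$ as $f=\sum_{k=1}^{K}\ell_k^{e}q_k$ with $q_k\in\poly_{n,e'}$, so that $\rank_\F L(f)\le\dim\big(\sum_{k=1}^{K}\mathrm{colspace}\,L(\ell_k^{e}\cdot\poly_{n,e'})\big)$. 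The crucial point --- and the technical heart --- is that this sum of column spaces fits into a single subspace of $\F^{m}$ of dimension at most $(d+1)\binom{n+e}{n}r$: the matrices $N(\ba,\bc)=L(\ell_\ba^{e}\ell_\bc^{e'})$ are bihomogeneous of bidegree $(e,e')$, have rank $\le(d+1)r$ at every $(\ba,\bc)$, and restrict to $M(\ba)$ on the diagonal, and it is precisely the relation between their symbolic rank and the ranks/column-spaces of their $\F$-evaluations --- made quantitative through a low-degree factorization $M(\ba)=A(\ba)B(\ba)$ with $A(\ba)$ an $m\times r$ matrix whose entries have degree $\le e$ in $\ba$, coming from the bilinear structure $L(\ell_\ba^{e}\ell_\bc^{e'})$ rather than from Cramer's rule on $M$ itself --- that localizes all the $\mathrm{colspace}\,L(f)$ inside one space of the right dimension. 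One runs this for both flattenings $\poly_{n,e}\otimes\poly_{n,e'}$ and $\poly_{n,e'}\otimes\poly_{n,e}$ and keeps the smaller bound, which is the $e$-side one above.

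I expect the main obstacle to be exactly this localization. The naive inequality $\rank_\F L(f)\le\sum_k\rank_\F L(\ell_k^{e}q_k)$ is far too lossy --- for $L=\mathrm{Cat}_e$ it overshoots by a factor of order $n^{\lceil d/2\rceil}$ --- so one must genuinely show that the $K$ individually large column spaces $\mathrm{colspace}\,L(\ell_k^{e}\cdot\poly_{n,e'})$ overlap enough to live in a common space of dimension $(d+1)\binom{n+e}{n}r$. Producing the degree-$\le e$ factorization $M(\ba)=A(\ba)B(\ba)$ (so that the columns of $A$ span a space of dimension at most $r$ times the number of degree-$\le e$ monomials in $\ba$, namely $r\binom{n+e}{n}$), and then absorbing the degree-$e'$ side through the two-affine-forms bound of the second step without double counting, is the delicate part; the rest is the linear-algebra and Schwartz--Zippel bookkeeping sketched above.
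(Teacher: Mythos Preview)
Your proposal leaves the central step unresolved, and the device you hope will resolve it does not obviously deliver the factorization you need. You correctly set up the symbolic matrix $M(\ba)=L(\ell_\ba^{\,d})$, pass to symbolic rank $\le r$, and your interpolation argument genuinely shows that $N(\ba,\bc)=L(\ell_\ba^{\,e}\ell_\bc^{\,e'})$ has rank $\le (d+1)r$ at every point, hence symbolic rank $\le(d+1)r$. But from this you only \emph{assert} the existence of a polynomial factorization $M(\ba)=A(\ba)B(\ba)$ with $A$ of size $m\times r$ and $\deg_\ba A\le e$; you call this ``the delicate part'' and do not produce it. A rank-$r$ decomposition over $\F(\ba)$ gives rational entries with no degree control, and nothing about the bihomogeneity of $N$ or its rank bound $(d+1)r$ forces a balanced-degree \emph{polynomial} factorization of $M$ itself. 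Worse, the factorization you describe (inner dimension $r$, degree of $A$ at most $e$) would yield $\rk L(f)\le r\binom{n+e}{n}$ \emph{without} the $(d+1)$, stronger than the theorem; the paper even conjectures that the $(d+1)$ in the homogeneous decomposition is essentially tight, so this is not a safe thing to hope for.

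The paper's route bypasses all of this with a direct trick you are missing. Starting from the rational rank decomposition $M(\ba)=\sum_{i=1}^{r}\frac{1}{t_i(\ba)}\,\bp_i(\ba)\otimes\bq_i(\ba)$, one shifts $\ba\mapsto\ba+\ba_0$ so every $t_i$ has nonzero constant term, expands $1/(1-\hat t_i)$ as a geometric series, and truncates at total degree $d$; since $M$ is homogeneous of degree $d$, one recovers $M(\ba)=\sum_{i=1}^{r} H_d[\bf_i(\ba)\otimes\bg_i(\ba)]$ with $\bf_i,\bg_i$ \emph{polynomial}. Expanding $H_d[\bf_i\otimes\bg_i]=\sum_{k=0}^{d}H_k[\bf_i]\otimes H_{d-k}[\bg_i]$ gives at most $r(d+1)$ rank-one summands, each a tensor of homogeneous vectors; this is where the $(d+1)$ actually comes from, not from any Veronese interpolation. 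In each summand one factor has degree $\le\lfloor d/2\rfloor$, so its coefficient space in the $n+1$ variables $\ba$ has dimension at most $\binom{n+\lfloor d/2\rfloor}{n}$, and subadditivity of rank over the coefficient space $\cC(M)$ gives $\rk(\cC(M))\le r(d+1)\binom{n+\lfloor d/2\rfloor}{n}$. Since $L(f)\in\cC(M)$ for every $f$, the bound follows. Your bilinear matrix $N(\ba,\bc)$ and its $(d+1)r$ rank bound play no role in this argument.
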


\subsection{High-level ideas of the proof}\label{sec:proof-overview}

As mentioned, the proofs of our barrier results use only simple tools of linear algebra (although their use and combination is a bit subtle). Here are the key ideas of the proof, written abstractly in the general notation established above (again, we believe that they can be applied in other settings beyond the two we consider in this paper).

Consider any simple set $S$ of polynomials, and rank methods $\Delta_0^S$ for it. Thus, we need to provide an upper bound on the quantity $c(\Delta_0^S)$, namely on the ratio $\mu_L(f)/\mu_L(S)$ for every $f\in \hat S$, and every linear map $L: S \rightarrow \mat_m(\F)$. Set $r = \mu_L(S)$. 

\begin{itemize}
\item
We view linear map $L$, which gives rise to a sub-additive measure in $\Delta_0^S$, as a matrix polynomial, namely as a polynomial with matrix coefficients, or equivalently as a symbolic matrix whose entries are polynomials. The variables of these polynomials will be the {\em parameters} of the family of {\em simple} polynomials $S$ (these parameters are the coefficients of the linear forms appearing in the decompositions in both the tensor rank and Waring rank settings). Call this symbolic matrix $L(S)$.
\item Next, the {\em symbolic} rank of $L(S)$
(over the field of rational functions in these variables) is bounded by the maximum rank of any {\em evaluation} of this matrix polynomial (this is the only place we use the fact that the field is large enough). By assumption, as these evaluations are all in the image of $L$ on the simple polynomials $S$, this maximum rank is at most $r$, and so is the symbolic rank. 
\item The symbolic rank gives rise to a decomposition $L(S)=KM$ with $M,K$ having dimensions $m\times r$ and $r\times m$ respectively, and their entries are {\em rational functions} in the variables appearing in $L(S)$. We show that with a small loss in the dimension $r$, this affords a much nicer decomposition $L(S)=K'M'$, with dimensions $m\times r'$ and $r' \times m$ respectively, but now the entries of $K',M'$ are {\em polynomial} functions of the variables. Moreover, the polynomials in every column of $K'$ and every row of $M'$ are homogeneous of the same degree. For tensor rank we obtain $r'=r2^d$, and for Waring rank we have $r'=r(d+1)$. 
\item As all entries in matrix $L(S)$ are polynomials of degree $d$, we must have for every $i\in [r']$, that either the $i$'th column of $K'$ or the $i$'th row of $M'$ have degree at most $\lfloor d/2 \rfloor$. The dimension of the space of (vector) coefficients of these vectors of polynomials is an appropriate function $D$ of $n,d$ (which in both cases we care about is about $n^{\lfloor d/2 \rfloor}$). Each such vector of polynomials generates at most $D$ constant vectors of their coefficients.
\item Combining what we have, we see that for every $g\in S$, we have a decomposition  $L(g)=C+R$, where the columns of $C$ are spanned  by at most $r'D$ vectors, and the rows of $R$ are spanned by at most $r'D$ vectors (indeed the total number of these vectors is $r'D$). This gives an upper bound of $r'D$ on the rank of each $L(g)$, which of course is not interesting as we already have an upper bound of $r$ on each.
\item The punchline is obtained by using the linearity of $L$, and the fact that 
 $\hat S$ is the linear span of $S$. Together, these imply that 
{\em every} matrix $L(f)$ with $f\in \hat S$ is also in the linear span of the matrices $\{ L(g) \,:\, g\in S\}$, and so the same decomposition holds for them. Thus, the rank of each $L(f)$ is at most $r'D$, which is a bound on $\mu_L(\hat S)$. Thus, $c(\Delta_0^S) \leq r'D/r$. In the two settings we consider, $D$ is roughly the best known explicit lower bound, and $r'/r$ is a function of $d$  (namely, $d+1$ for Waring rank, and $2^d$ for tensor rank).

\end{itemize}

\subsection{Related Work}

We now mention other attempts to provide barriers to arithmetic circuit lower bounds. 
We also mention rank lower bounds in Boolean complexity, and barriers for them. As will 
be evident, our work is very different than both sets. 

All barrier results we are aware of in arithmetic complexity theory attempt to find analogs of the {\em natural proof} barrier in Boolean circuit complexity of Razborov and Rudich~\cite{RR94}. Roughly, a lower bound technique is {\em natural} if it satisfies three properties: {usefulness, constructively, largeness} which we will not need to define. They show how many Boolean circuit lower bound techniques satisfy these properties. Now crucially, the barrier results for natural proofs in the Boolean setting are {\em conditional}:  they hold under a computational assumption on the existence of efficient pseudorandom generators. In this setting, this assumption is widely believed, and is known to follow from e.g. the existence of exponentially hard one-way functions (one which the world relies for cryptography and e-commerce).

In several works, starting with~\cite{AD08,G15}, and following with the 
recent~\cite{FSV17, GKSS17}, it was understood that an analogous framework with the same three properties is simple to describe (replacing the representation of Boolean functions by their  truth tables by the representation of low-degree multivariate polynomials by their list of coefficients). And indeed, it captures essentially all arithmetic lower bounds known. Unfortunately, the main difference from the Boolean setting is the non-existence of an analogous pseudo-randomness theory, and a believable complexity assumption. Several suggestions for such an assumption were made in the works above, and as articulated in~\cite{FSV17, GKSS17}, they all take the form of the existence of {\em succinct} hitting sets for small arithmetic circuits (indeed, such existence is {\em equivalent} to a barrier result). This assumption is related to PIT (polynomial identity testing) and GCT (geometric complexity theory), but the confidence in it is still shaky (initial work in \cite{FSV17} shows succinct hitting sets against extremely weak models of arithmetic circuits). But regardless how believable this assumption is, note that this barrier is again, conditional!

As mentioned earlier, our barrier results are completely unconditional, and moreover require no constructivity from the lower bound proof (thus capturing methods which are not strictly natural in the sense above). On the other hand, our framework of rank methods capture only a large subset, but certainly not all of the known lower bound techniques.

It is interesting that rank methods were used not only in arithmetic complexity, but also in Boolean complexity.  While not directly related to our arithmetic setting, we mention where it was used, and which barriers were studied. First, Razborov has used the rank of matrices in an essential way for his lower bound on $AC^0[2]$ (although an elegant route around it was soon after devised by Smolensky~\cite{S93}). In another work, Razborov~\cite{Razb99} has shown how rank methods can be used to prove superpolynomial  lower bounds on {\em monotone} Boolean formulas. His methods were recently beautifully  extended to other monotone variants of other models including span programs and comparator  circuits in~\cite{RPRC16}. The potential of such methods to proving {\em non-monotone} lower bounds for Boolean  formulas was considered by Razborov~\cite{Rsub}, where he proves a strong barrier result in this Boolean setting. Observing that rank is a {\em submodular} function, he 
presents a barrier for any submodular progress measure on Boolean formulae: {\em no 
such method can prove a super-linear lower bound!}. His barrier was recently made more explicit in~\cite{Pot16}.

\subsection{Organization}

In Section~\ref{sec:prelim} we establish the notation that will be used 
throughout the paper and provide some lemmas which we will need in the 
later sections. In Section~\ref{sec:matrix-decomp}, we
establish the main technical content of our paper: we define three notions of 
matrix decomposition and relate these new definitions to commutative rank. 
In Section~\ref{sec:rank-bounds}, we apply the
new decompostions from Section~\ref{sec:matrix-decomp} to obtain the main 
results of the paper, which are the limitations of the rank techniques.
In Section~\ref{sec:approach}, we raise the question of what lower bounds can
still be proved using rank methods, and we propose an approach (\emph{using rank methods}) 
for proving better lower bounds for (non-homogeneous) depth-3 formulas. 
Finally, in Section~\ref{sec:conclusion} we conclude the paper and present some open 
questions and future directions of this work.

\section{Preliminaries}\label{sec:prelim}

In this section, we establish the notation which will be used throughout the
paper and some important background which we shall need to prove our claims
in the next sections.

\subsection{General Facts and Notations}

For simplicity of exposition, we will work over a field $\F$ which is algebraically closed and
of characteristic zero, even though our results also hold over infinite fields 
which need not be algebraically closed.\footnote{In general, we only need a field with characteristic
polynomial in the number of variables, the degree of the polynomials and the dimension of matrices being studied. 
We cannot work over field extensions, as we need to use Lemma~\ref{lem:generic-rank} over the base field.} 
From now on we will use boldface to denote a vector of 
variables or of field elements. For instance, $\bx = (x_1, \ldots, x_n)$ is the vector of 
variables $x_1, \ldots, x_n$ and $\ba = (a_1, \ldots, a_n) \in \F^n$ is a vector of elements 
$a_1, \ldots, a_n$ from the field $\F$.  

For any vector of non-negative integers $\ba \in \N^{n}$ and a vector of $n$ variables $\bx$, 
we define $\ba! = \dst \prod_{i=1}^n a_i!$ and 
$\bx^\ba = \dst \frac{1}{\ba!} \cdot \prod_{i=1}^n x_i^{a_i}$. Since the monomials $\bx^\ba$,
$\ba \in \N^n$, form a linear basis for the ring of polynomials $\F[\bx]$, we can write any
polynomial $f(\bx) \in \F[\bx]$ as 
$$ f(\bx) = \sum_{\ba \in \N^n} \alpha_\ba \bx^\ba.  $$
We will denote the coefficients of the polynomial $f(\bx)$ by $\coeff_\ba(f(\bx)) = \alpha_\ba$.

We denote the partial derivative 
$\D_{\ba} = \D_{x_1}^{a_1}\D_{x_2}^{a_2} \cdots \D_{x_n}^{a_n}$. Hence, if we take 
partial derivative $\D_\ba$ of monomial $\bx^{\ba+\bb}$, we get 
$$ \D_\ba \bx^{\ba+\bb} = \bx^\bb. $$ 

The degree of a polynomial $f(\bx) \in \F[\bx]$ with respect to a variable $x_i$, denoted by
$\deg_i(f(\bx))$ is the maximum degree of $x_i$ in a nonzero monomial of $f(\bx)$. If 
$\deg_i(f(\bx)) \leq 1$ for every variable $x_i$, we say that the polynomial $f(\bx)$ is a
{\em multilinear} polynomial. Moreover, if $f(\bx)$ is multilinear and the variables in 
$\bx$ can be partitioned into sets $\bx_1, \ldots, \bx_d$ such that each monomial from $f(\bx)$ 
has at most one variable from each of the sets $\bx_i$, we say that $f(\bx)$ is a 
{\em set-multilinear} polynomial. 

\begin{definition}[Homogeneous Components]
	For a polynomial $f(\bx)$, denote its homogeneous part of degree $t$ by $H_t[f(\bx)]$.
	Additionally, define 
	$$H_{\leq t}[f(\bx)] = \dst\sum_{i=0}^t H_i[f(\bx)],$$ 
	that is, $H_{\leq t}[f]$ is the sum of the homogeneous components of $f(\bx)$ up to degree $t$.
	We can extend this definition to matrices of polynomials in the natural way. Namely, if 
	$\bf(\bx)$ is a matrix of polynomials of the form $(f_{ij}(\bx))_{i,j}$, we define
	$H_t[\bf(\bx)] = (H_t[f_{ij}(\bx)])_{i,j}$, that is, $H_t[\bf(\bx)]$ is the
	matrix given by the homogeneous components of degree $t$ of each entry of $\bf(\bx)$.
\end{definition} 

\begin{definition}[Homogeneous Set Multilinear Components]
	Let $\bx = (\bx_1, \ldots, \bx_d)$ be a set of variables, partitioned into $d$ sets of variables 
	$\bx_1, \ldots, \bx_d$. For a polynomial $f(\bx)$ of degree $d$, let 
	$H^{SM}_S[f(\bx)]$ denote its homogeneous 
	set-multilinear part corresponding to subpartition $S \subseteq [d]$. That is, $H^{SM}_S[f(\bx)]$
	consists of the sum of all monomials (with the appropriate coefficients) of $f(\bx)$ of degree 
	exactly $|S|$ which are set-multilinear with respect to the partition $(\bx_i)_{i \in S}$.
\end{definition} 

\begin{example}
	Let $\bx_1 = (x_{11}, x_{12})$ and $\bx_2 = (x_{21}, x_{22})$ be the variable partition of $\bx = (\bx_1, \bx_2)$. 
	If $f(\bx) = x_{11}^2x_{12} - 3x_{11}x_{21} + 2x_{12}x_{21} - x_{22}^2 + x_{11} - x_{12} + 4x_{21}$, we have that
	$$ H^{SM}_{\{1\}}[f(\bx)] = x_{11} - x_{12},  $$
	whereas
	$$ H^{SM}_{\{1, 2\}}[f(\bz)] = - 3x_{11}x_{21} + 2x_{12}x_{21}. $$
\end{example}

The following lemma tells us that any nonzero polynomial cannot vanish on a large portion
of any sufficiently large grid.

\begin{lemma}[Schwartz-Zippel-DeMillo-Lipton~\cite{S80, Z79, DL78}]
\label{lem:schwartz-zippel}
	Let $\F$ be any field such that $|\F| > d$ and let $S \subseteq \F$ be such that
	$|S| > d$. If $p(\bx) \in \F[\bx]$ is a nonzero polynomial 
	of degree $d$, then
	$$ \Pr_{\ba \in S^n}[p(\ba) = 0] \leq \frac{d}{|S|}. $$
\end{lemma}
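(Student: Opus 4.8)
The plan is the classical induction on the number of variables $n$; this is a well-known lemma, so I do not expect any genuine obstacle, and the only care needed is in setting up the conditioning argument cleanly.

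\textbf{Base case ($n=1$).} A nonzero univariate polynomial of degree at most $d$ has at most $d$ roots over any field, so at most $d$ of the $|S|$ points of $S$ can be roots, and therefore $\Pr_{a_1\in S}[p(a_1)=0]\le d/|S|$.

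\textbf{Inductive step.} Given a nonzero $p(\bx)$ of degree at most $d$ in $n\ge 2$ variables, I would write it as a polynomial in $x_n$,
\[ p(\bx)=\sum_{i=0}^{k}x_n^{\,i}\,p_i(x_1,\ldots,x_{n-1}), \]
where $k$ is the largest exponent of $x_n$ that occurs, so that $p_k\not\equiv 0$ and $\deg p_k\le d-k$. Draw $\ba=(a_1,\ldots,a_n)\in S^n$ uniformly and let $E$ be the event $p_k(a_1,\ldots,a_{n-1})=0$. Applying the inductive hypothesis to $p_k$ (a nonzero polynomial in $n-1$ variables of degree at most $d-k$) gives $\Pr[E]\le (d-k)/|S|$. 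Conditioned on $\overline{E}$, the univariate polynomial $q(x_n):=p(a_1,\ldots,a_{n-1},x_n)$ has leading coefficient $p_k(a_1,\ldots,a_{n-1})\ne 0$, hence is nonzero of degree exactly $k$; since $a_n$ is drawn from $S$ independently of $a_1,\ldots,a_{n-1}$, the base case yields $\Pr[q(a_n)=0\mid\overline{E}]\le k/|S|$. Combining via the obvious union bound over whether $E$ occurs,
\[ \Pr[p(\ba)=0]\le\Pr[E]+\Pr[p(\ba)=0\mid\overline{E}]\le\frac{d-k}{|S|}+\frac{k}{|S|}=\frac{d}{|S|}, \]
which completes the induction.

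The hypotheses $|S|>d$ (and a fortiori $|\F|>d$, which is implied since $S\subseteq\F$) are used only to ensure the resulting bound is nontrivial; the argument itself is purely elementary. The one place that demands mild care is verifying that, after fixing the first $n-1$ coordinates to a point outside the bad set $E$, the restricted polynomial $q$ is genuinely nonzero of the claimed degree and that $a_n$ remains uniform and independent, so that the base case legitimately applies to the last coordinate.
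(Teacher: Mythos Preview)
Your proof is correct and is the standard induction argument. Note, however, that the paper does not supply its own proof of this lemma: it is stated as a known result with citations to Schwartz, Zippel, and DeMillo--Lipton, and is used as a black box later on. So there is no paper proof to compare against; your argument is precisely the classical one those references contain.
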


\subsection{Matrix Spaces}

In this section, we introduce the concept of matrix spaces and establish some of their 
important properties which we will use in the next sections. We begin by establishing some
notations for matrices and tensors.

If $V$ is a vector space of dimension $n$ over a field $\F$, we can identify $V = \F^n$. 
In this case, we denote the $d^{th}$ tensor power of $V$ by $\ten_{n, d}(\F) = V^{\otimes d}$.
We denote the space of $n \times n$ matrices $V^{\otimes 2}$ by $\mat_n(\F) = \ten_{n,2}(\F)$.
Sometimes we will abuse notation and write $\mat_n(R)$ for the ring of matrices whose entries
take value over a ring $R$.

A tensor $T \in \ten_{n,d}(\F)$ is a rank-1 tensor if it can be written in the form 
$T = \bv_1 \otimes \cdots \otimes \bv_d$, where each $\bv_i \in \F^n$. Given any tensor
$T \in \ten_{n,d}(\F)$, its rank over $\F$ (denoted by $\rk_\F(T)$) is the minimum number $r$ 
of rank-1 tensors $T_1, \ldots, T_r$ such that $T = T_1 + \cdots + T_r$. Whenever the base field is
clear from context, we will denote $\rk_\F(T)$ simply by $\rk(T)$. 

If $M_1, \ldots, M_k$ are matrices in $\mat_m(\F)$ and $x_1, \ldots, x_k$ are commuting variables, we denote 
$\rk_{\F(x_1, \ldots, x_k)}(\sum_{i=1}^k x_i M_i)$ the {\em symbolic rank} of the matrix $\sum_{i=1}^k x_i M_i$.

\begin{definition}[Rank of a Set of Matrices]
	If $\cM \subset \mat_m(\F)$ is a set of $m \times m$ matrices over the field $\F$,
	define 
	$$\rk(\cM) = \max_{M \in \cM} \rk(M).$$ 
	That is, the rank of the set $\cM$ is given by the maximum rank (over $\F$) among its elements.
\end{definition}

The symbolic rank is important as it characterizes the rank of a linear space of matrices, as
seen in the following proposition.

\begin{proposition} Let $\cM \subseteq \mat_m(\F)$ be a space of matrices. 
If $M_1,\ldots M_m$ is a basis for $\cM$ and $x_1,x_2\ldots x_m$ 
are variables then 
$$ \rk(\cM)= \rk_{\F(x_1,\ldots x_{m})} \left(\sum_{i=1}^m x_i M_i \right). $$
\end{proposition}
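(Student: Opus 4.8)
The plan is to prove the two inequalities $\rk(\cM) \le \rk_{\F(\bx)}\!\left(\sum_i x_i M_i\right)$ and $\rk(\cM) \ge \rk_{\F(\bx)}\!\left(\sum_i x_i M_i\right)$ separately, where $\bx = (x_1,\ldots,x_m)$.

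First I would handle the easy direction, $\rk(\cM) \ge \rk_{\F(\bx)}\!\left(\sum_i x_i M_i\right)$. The point is that every element $M \in \cM$ is an evaluation of the symbolic matrix $\sum_i x_i M_i$: writing $M = \sum_i a_i M_i$ for scalars $\ba = (a_1,\ldots,a_m) \in \F^m$, we have $M = \left(\sum_i x_i M_i\right)\big|_{\bx = \ba}$. Now if $r = \rk_{\F(\bx)}\!\left(\sum_i x_i M_i\right)$, then every $(r+1)\times(r+1)$ minor of $\sum_i x_i M_i$ vanishes as a rational function (equivalently, as a polynomial), hence vanishes at every point $\ba \in \F^m$; so every $M \in \cM$ has rank at most $r$, giving $\rk(\cM) \le r$. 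Since $\cM$ is the span of the $M_i$, actually this shows $\rk(M) \le r$ for all $M \in \cM$, which is exactly $\rk(\cM) \le r$.

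The other direction, $\rk(\cM) \ge r$, is where the hypothesis that $\F$ is large (algebraically closed, or just infinite/large enough) is used. Let $r = \rk_{\F(\bx)}\!\left(\sum_i x_i M_i\right)$. Then there is some $r \times r$ minor of $\sum_i x_i M_i$ which is a nonzero polynomial $p(\bx) \in \F[x_1,\ldots,x_m]$ of degree at most $r$. By Lemma~\ref{lem:schwartz-zippel} (Schwartz–Zippel), or simply because a nonzero polynomial over an infinite field is not identically zero, there exists $\ba \in \F^m$ with $p(\ba) \ne 0$. The matrix $M = \sum_i a_i M_i \in \cM$ then has a nonvanishing $r \times r$ minor, so $\rk(M) \ge r$, hence $\rk(\cM) \ge r$. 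Combining the two inequalities gives the claimed equality.

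The main obstacle — really the only subtle point — is the second direction, and specifically making sure the field is large enough that a nonzero polynomial of degree up to $r$ has a nonvanishing evaluation; this is precisely why the paper restricts to infinite (or sufficiently large) fields and why, as the footnote in the preliminaries notes, one cannot pass to field extensions. Everything else is routine linear algebra: the equivalence between "symbolic rank $\ge r$" and "some $r\times r$ minor is a nonzero rational function" and between "$\rk(M)\le r$ for a scalar matrix $M$" and "all $(r+1)\times(r+1)$ minors vanish." I would also remark that the statement and proof are independent of the choice of basis $M_1,\ldots,M_m$, since both sides only depend on the space $\cM$.
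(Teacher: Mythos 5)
Your proof is correct, and since the paper states this proposition without proof, there is no authoritative version to compare against; the two-inequality argument you give (all $(r+1)\times(r+1)$ minors vanish identically and hence under every specialization, giving $\rk(\cM)\le r$; Schwartz--Zippel produces a specialization at which some $r\times r$ minor survives, giving $\rk(\cM)\ge r$) is surely the intended one, and is what the paper later abstracts as Lemma~\ref{lem:generic-rank}. One small slip: you label the first, ``easy'' direction as $\rk(\cM)\ge \rk_{\F(\bx)}\bigl(\sum_i x_i M_i\bigr)$, but the argument you then give in fact proves $\rk(\cM)\le \rk_{\F(\bx)}\bigl(\sum_i x_i M_i\bigr)$, consistent with your own concluding line of that paragraph; the inequality sign in the label should be reversed.
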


The propostion above, together with Lemma~\ref{lem:schwartz-zippel}, imply the
following lemma:

\begin{lemma}[Rank Upper Bound on Polynomial Matrices]\label{lem:generic-rank}
	Let $\bx = (x_1, \ldots, x_n)$. If $M(\bx) \in \mat_m(\F[\bx])$ 
	is a matrix
	such that $\rk_\F(M(\ba)) \leq r$ for all $\ba \in \F^n$,
	then $\rk_{\F(\bx)}(M(\bx)) \leq r$.
\end{lemma}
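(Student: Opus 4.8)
The plan is to reduce the statement about the polynomial matrix $M(\bx)$ to the Schwartz--Zippel lemma applied to a suitable collection of minors. Recall that the rank of a matrix over a field equals the largest size of a nonvanishing minor. So the natural approach is: suppose for contradiction that $\rk_{\F(\bx)}(M(\bx)) \geq r+1$. Then some $(r+1) \times (r+1)$ submatrix of $M(\bx)$ has nonzero determinant as an element of $\F(\bx)$ --- and since the entries of $M(\bx)$ are polynomials, this determinant is itself a nonzero polynomial $p(\bx) \in \F[\bx]$ of degree at most $(r+1)d$, where $d$ bounds the degree of the entries of $M(\bx)$. By Lemma~\ref{lem:schwartz-zippel} (or just the fact that $\F$ is infinite), a nonzero polynomial cannot vanish identically on $\F^n$, so there is a point $\ba \in \F^n$ with $p(\ba) \neq 0$. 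But $p(\ba)$ is exactly the corresponding $(r+1)\times(r+1)$ minor of the scalar matrix $M(\ba)$, so $\rk_\F(M(\ba)) \geq r+1$, contradicting the hypothesis.

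In more detail, the key steps in order are: (i) express $\rk_{\F(\bx)}(M(\bx))$ as $\max\{ k : \text{some } k\times k \text{ minor of } M(\bx) \text{ is a nonzero element of } \F(\bx)\}$; (ii) observe that every such minor is a polynomial in $\F[\bx]$ (being a polynomial combination of polynomial entries), so ``nonzero in $\F(\bx)$'' is the same as ``nonzero in $\F[\bx]$'', i.e. not the zero polynomial; (iii) pick a $(r+1)\times(r+1)$ minor $p(\bx)$ witnessing $\rk_{\F(\bx)}(M(\bx)) \geq r+1$ (if the rank were $\le r$ we are done); (iv) apply Lemma~\ref{lem:schwartz-zippel} with any $S \subseteq \F$ of size greater than $\deg(p) \le (r+1)d$ --- which exists since $\F$ is infinite, or at least large enough by the paper's standing assumption --- to conclude $\Pr_{\ba \in S^n}[p(\ba)=0] < 1$, hence there exists $\ba \in S^n \subseteq \F^n$ with $p(\ba) \neq 0$; (v) note $p(\ba)$ equals the value of that same minor of the scalar matrix $M(\ba)$, so $\rk_\F(M(\ba)) \geq r+1 > r$, contradicting the assumption that $\rk_\F(M(\ba)) \le r$ for all $\ba \in \F^n$.

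The only place any care is needed is the bookkeeping on field size: to invoke Lemma~\ref{lem:schwartz-zippel} we need a subset $S$ with $|S| > \deg(p)$, and $\deg(p)$ can be as large as $(r+1)d \le (m+1)d$ since $M(\bx)$ is $m\times m$ with entries of degree at most $d$. Over an algebraically closed field of characteristic zero (the paper's working assumption) this is immediate since $\F$ is infinite; the footnote in the paper already flags that in general one wants $|\F|$ polynomial in $m$, $d$, and the number of variables, which is exactly what this degree bound demands. I do not expect any genuine obstacle here --- the lemma is essentially a restatement of Schwartz--Zippel via the minor characterization of rank --- so the ``hard part'' is merely making sure the degree bound on the minors is stated correctly and that one does not accidentally need the rank over an extension field (which is why the paper insists on staying over the base field).
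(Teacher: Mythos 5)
Your proof is correct and uses exactly the ingredients the paper appeals to (the minor characterization of rank plus Schwartz--Zippel over a sufficiently large field); the paper itself gives no explicit argument, merely asserting that the preceding proposition and Lemma~\ref{lem:schwartz-zippel} imply the statement, and your contrapositive-via-nonvanishing-minor argument is the standard way to flesh that out. Your degree bookkeeping ($\deg p \le (r+1)d \le md$ for $m\times m$ matrices with entries of degree $\le d$) also matches the paper's footnote about the required field size.
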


The following proposition shows one way in which a linear space of matrices is
of low rank. This decomposition and its variants will be very useful to us throughout 
the paper.

\begin{proposition}\label{prop:rank-bd-tensor}
	Let $\cM \subset \mat_m(\F)$ be a vector space of matrices such
	that $\cM = \spn(U \otimes V)$, where $U, V \subset \F^m$
	are vector spaces of dimensions $r$ and $s$, respectively. Then,
	$$ \rk(\cM) = \min(r, s). $$
\end{proposition}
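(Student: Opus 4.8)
The plan is to prove both the upper bound $\rk(\cM) \le \min(r,s)$ and the matching lower bound separately.

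\medskip

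\textbf{Upper bound.} Suppose $U$ has basis $u_1, \dots, u_r$ and $V$ has basis $v_1, \dots, v_s$. Then $\cM = \spn(U \otimes V)$ is spanned by the $rs$ rank-one matrices $u_i v_j^T$ (identifying $\F^m \otimes \F^m$ with $\mat_m(\F)$). Any $M \in \cM$ is therefore a linear combination $M = \sum_{i,j} c_{ij}\, u_i v_j^T$. The key observation is that the column space of every such $M$ is contained in $U = \spn(u_1, \dots, u_r)$, since each term $u_i v_j^T$ has columns that are scalar multiples of $u_i$. Hence $\rk(M) \le \dim U = r$. Symmetrically, the row space of $M$ is contained in $V$, so $\rk(M) \le \dim V = s$. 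Thus $\rk(\cM) = \max_{M \in \cM} \rk(M) \le \min(r,s)$.

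\medskip

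\textbf{Lower bound.} It suffices to exhibit a single matrix $M \in \cM$ of rank exactly $\min(r,s)$; without loss of generality assume $r \le s$. Take $M = \sum_{i=1}^{r} u_i v_i^T$. I claim $\rk(M) = r$. Indeed, as a map $M \colon \F^m \to \F^m$, for any $w \in \F^m$ we have $Mw = \sum_{i=1}^r (v_i^T w)\, u_i$. Since the $v_i$ are linearly independent, there exist vectors $w_1, \dots, w_r$ with $v_i^T w_k = \delta_{ik}$ (the Gram-type matrix $(v_i^T w_k)$ can be made the identity because $v_1, \dots, v_r$ extend to a basis of $\F^m$ and we pick $w_k$ dual to them); then $M w_k = u_k$. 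So the image of $M$ contains the linearly independent set $u_1, \dots, u_r$, giving $\rk(M) \ge r$, and combined with the upper bound $\rk(M) \le r$ we get equality. Therefore $\rk(\cM) \ge r = \min(r,s)$, completing the proof.

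\medskip

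\textbf{Main obstacle.} The argument is almost entirely routine linear algebra; the only point requiring a little care is the lower bound, specifically producing $w_1, \dots, w_r$ dual to $v_1, \dots, v_r$. This is immediate over any field since linear independence of the $v_i$ guarantees the corresponding evaluation functionals are independent, so one can solve for the $w_k$; one should just make sure this is phrased so it works over an arbitrary (not necessarily algebraically closed) field, which it does. An alternative to the explicit duality argument is to note that $u_i v_i^T$ for $i = 1, \dots, r$ are linearly independent in $\mat_m$ and a short induction or a direct minor argument shows their sum has rank $r$; I expect the dual-vector phrasing to be cleanest. One might also remark on the edge case where $U$ or $V$ is zero, in which $\cM = \{0\}$ and both sides are $0$.
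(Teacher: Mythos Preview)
Your proof is correct. For the upper bound, your argument via column/row space containment is essentially the same as the paper's: the paper writes any $M \in \cM$ as $\sum_{i=1}^r \bu_i \otimes \bv_i$ with $\bv_i \in V$ (absorbing the coefficients into the second factor) and reads off $\rk(M)\le r$ directly, which is just a repackaging of your observation that the column space lies in $U$.

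One difference worth noting: the paper's proof in fact establishes \emph{only} the inequality $\rk(\cM)\le \min(r,s)$ and stops there, since that is all that is needed for the subsequent corollaries. Your proof additionally supplies the matching lower bound by exhibiting $M=\sum_{i=1}^r u_i v_i^T$ and using duality to show its image contains $u_1,\dots,u_r$. That argument is correct over any field, so your proof actually establishes the full equality claimed in the statement, whereas the paper's proof does not.
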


\begin{proof}
	W.l.o.g., assume that $r \leq s$. Let $\bu_1, \ldots, \bu_r \in U$ be a basis for the 
	space $U$. As $\cM = \spn(U \otimes V)$, we have that any 
	$M \in \cM$ can be written in the form 
	$$ M = \sum_{i=1}^r \bu_i \otimes \bv_i,  \text{ where } \bv_i \in V, \text{ for } i \in [r].$$
	Hence, $\rk(M) \leq r = \min(r,s)$, for any $M \in \cM$. As 
	$\dst \rk(\cM) = \max_{M \in \cM} \rk(M)$, we obtain that $\rk(\cM) \leq \min(r, s)$, as
	we wanted.
\end{proof}

\subsection{Coefficient Spaces and Their Properties}

As we saw in Section~\ref{sec:proof-overview}, linear spaces of matrices may possess
special structure if they are generated by the coefficients of a matrix of polynomials. 
This observation, together with the definition below, are crucial in obtaining upper bounds
for the rank techniques which we study.

\begin{definition}[Coefficient Space]{\label{def:coef}}
	Let $M(\bx) \in \F[\bx]^{m \times k}$ be a symbolic matrix of polynomials. 
	Considering the monomial basis
	$\{ \bx^\be \}_{\be \in \N^n}$ for the space $\F[\bx]$, we can write 
	$M(\bx) = \dst\sum_{\be \in \N^n} M_\be \cdot \bx^\be$, where each 
	$M_\be \in \F^{m \times k}$ is a matrix of field elements. We define the 
	\emph{coefficient space} of $M(\bx)$, denoted
	by $\cC(M(\bx))$, as the vector space spanned by the vectors $M_\be$. That is,
	$$ \cC(M(\bx)) = \spn\{ M_\be \mid \be \in \N^n \}. $$
	Note that $\cC(M(\bx)) \subseteq \F^{m \times k}$.
\end{definition}

Having the definition above, we proceed to show some nice properties of the
coefficient space of a matrix of polynomials.

\begin{proposition}
	Let $\bx = (x_1, \ldots, x_n)$. If $\bf(\bx)\in \F[\bx]^m$ is a vector of homogeneous 
	polynomials of degree $d$, then 
	$$ \dim(\cC(\bf(\bx))) \leq  \binom{n+d-1}{n-1}. $$
\end{proposition}

By using the proposition above and Propostion~\ref{prop:rank-bd-tensor}, we have the following
corollary:

\begin{corollary}\label{cor:dim-hom-ten}
	Let $\bx = (x_1, \ldots, x_n)$. If $\bf(\bx), \bg(\bx) \in \F[\bx]^m$ are vectors of 
	homogeneous polynomials of degree $d_f$ and $d_g$, respectively, then we have:
	$$\rk(\cC(\bf(\bx) \otimes \bg(\bx))) \leq 
	\min\left\{ \binom{n+d_f-1}{n-1}, \binom{n+d_g-1}{n-1} \right\}.$$
\end{corollary}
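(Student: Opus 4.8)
The plan is to deduce the corollary directly from the two results stated immediately before it: the dimension bound $\dim(\cC(\bh(\bx))) \le \binom{n+d-1}{n-1}$ for a vector $\bh(\bx)$ of homogeneous degree-$d$ polynomials, and Proposition~\ref{prop:rank-bd-tensor}. The only genuine content is to check that the coefficient space of the outer product $\bf(\bx) \otimes \bg(\bx)$ is contained in the tensor product of the individual coefficient spaces; everything else is monotonicity of set-rank together with substitution of the known bounds.

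First I would expand $\bf(\bx) = \sum_\ba \bf_\ba\, \bx^\ba$ and $\bg(\bx) = \sum_\bb \bg_\bb\, \bx^\bb$ with $\bf_\ba, \bg_\bb \in \F^m$, the sums ranging over exponent vectors of weight $d_f$ and $d_g$ respectively, so that by definition $\cC(\bf(\bx)) = \spn\{\bf_\ba\}$ and $\cC(\bg(\bx)) = \spn\{\bg_\bb\}$. Multiplying out, the matrix coefficient $M_\be$ of $\bx^\be$ in $\bf(\bx)\otimes\bg(\bx)$ is a scalar linear combination of the rank-one matrices $\bf_\ba \otimes \bg_\bb$ over the pairs with $\ba + \bb = \be$ (the scalars being the binomial factors coming from the normalization $\bx^\bc = \tfrac{1}{\bc!}\prod_i x_i^{c_i}$, which are irrelevant here). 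In particular each $M_\be$ lies in $\spn\bigl(\cC(\bf(\bx)) \otimes \cC(\bg(\bx))\bigr)$, hence $\cC(\bf(\bx)\otimes\bg(\bx)) \subseteq \spn\bigl(\cC(\bf(\bx)) \otimes \cC(\bg(\bx))\bigr)$.

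Next I would set $U = \cC(\bf(\bx))$ and $V = \cC(\bg(\bx))$, of dimensions $r$ and $s$ respectively, and apply Proposition~\ref{prop:rank-bd-tensor} to the space $\spn(U \otimes V)$ to get $\rk(\spn(U \otimes V)) = \min(r,s)$. Since the rank of a set of matrices is a maximum over that set, it is monotone under inclusion, so the containment from the previous paragraph yields $\rk(\cC(\bf(\bx)\otimes\bg(\bx))) \le \min(r,s)$. Finally the preceding proposition gives $r \le \binom{n+d_f-1}{n-1}$ and $s \le \binom{n+d_g-1}{n-1}$, so $\min(r,s)$ is at most $\min\{\binom{n+d_f-1}{n-1}, \binom{n+d_g-1}{n-1}\}$, which is exactly the asserted bound.

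I do not anticipate a real obstacle; the one place to be careful is the containment step — making sure that forming the product of the two polynomial vectors mixes the coefficient vectors only bilinearly and creates no new directions — together with the small point that Proposition~\ref{prop:rank-bd-tensor} is invoked on the possibly larger space $\spn(U \otimes V)$ rather than on $\cC(\bf(\bx)\otimes\bg(\bx))$ itself, the gap being closed by monotonicity of set-rank.
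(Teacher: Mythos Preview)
Your proposal is correct and follows exactly the route the paper indicates: the paper simply says the corollary follows ``by using the proposition above and Proposition~\ref{prop:rank-bd-tensor}'', and you have spelled out precisely that derivation, including the containment $\cC(\bf(\bx)\otimes\bg(\bx)) \subseteq \spn\bigl(\cC(\bf(\bx)) \otimes \cC(\bg(\bx))\bigr)$ and the monotonicity step, which are the only details the paper leaves implicit.
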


The bound above only requires the vectors of polynomials to be homogeneous. 
If these vectors possess more structure, we can obtain better bounds on the rank
of the coefficient space above. 
As we will soon see, if the vectors of polynomials $\bf(\bx), \bg(\bx)$ are vectors of 
set-multilinear polynomials, the two statements below yield a better bound.

\begin{proposition}
	Let $\bx = (\bx_1,\ldots, \bx_d)$ be a set of $nd$ variables, partitioned into $d$ 
	sets of $n$ variables each, denoted by $\bx_i$. If $\bf(\bx)\in \F[\bx]^m$ is a 
	vector of homogeneous and set-multilinear polynomials 
	of degree $d$, with respect to the partition $\bx = (\bx_1,\ldots, \bx_d)$, then 
	$$ \dim(\cC(\bf(\bx))) \leq  n^d. $$
\end{proposition}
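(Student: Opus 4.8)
The plan is to bound $\dim(\cC(\bf(\bx)))$ by the number of monomials that can possibly occur in the entries of $\bf(\bx)$, and then simply count those monomials. By Definition~\ref{def:coef}, writing $\bf(\bx) = \sum_{\be} \bf_\be \cdot \bx^\be$ with $\bf_\be \in \F^m$, we have $\cC(\bf(\bx)) = \spn\{\bf_\be \mid \be \in \N^{nd}\}$, and only those $\be$ with $\bf_\be \neq \bo$ contribute to the span. Such a $\be$ indexes a monomial $\bx^\be$ that appears (with nonzero coefficient) in at least one coordinate $f_i(\bx)$ of $\bf(\bx)$. Hence $\dim(\cC(\bf(\bx)))$ is at most the size of the set $\mathcal{E}$ of exponent vectors $\be$ for which $\bx^\be$ occurs in some $f_i$.

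The second step is to characterize $\mathcal{E}$ using the hypotheses. Since every $f_i(\bx)$ is homogeneous of degree $d$ and set-multilinear with respect to the partition $\bx = (\bx_1, \ldots, \bx_d)$ with $|\bx_k| = n$, every monomial appearing in any $f_i$ must use exactly one variable from each block $\bx_k$, i.e.\ it is of the form $\prod_{k=1}^d x_{k, j_k}$ for some choice $(j_1, \ldots, j_d) \in [n]^d$. (Note that for such monomials every exponent is $0$ or $1$, so the normalization factor $1/\be!$ in the definition of $\bx^\be$ equals $1$ and causes no complication.) Therefore $\mathcal{E}$ injects into $[n]^d$, so $|\mathcal{E}| \leq n^d$, and combining with the first step gives $\dim(\cC(\bf(\bx))) \leq |\mathcal{E}| \leq n^d$, as claimed.

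There is essentially no hard step here: the argument is a direct counting of set-multilinear degree-$d$ monomials, paralleling the proof of the preceding proposition (which bounds the analogous quantity for merely homogeneous vectors by $\binom{n+d-1}{n-1}$). The only point worth stating carefully is that passing from ``spanned by the coefficient matrices $M_\be$'' to ``dimension at most the number of nonzero $M_\be$'' is immediate, and that the set-multilinearity forces the support of each entry into the $n^d$ admissible monomials. One can then feed this proposition, together with Proposition~\ref{prop:rank-bd-tensor}, into a set-multilinear analogue of Corollary~\ref{cor:dim-hom-ten} exactly as indicated in the text.
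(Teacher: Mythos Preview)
Your proof is correct and is precisely the natural counting argument the paper has in mind; indeed, the paper states this proposition without proof, treating it as an immediate consequence of the definition of the coefficient space together with the fact that a homogeneous set-multilinear polynomial of degree $d$ with respect to the partition $(\bx_1,\ldots,\bx_d)$ is supported on the $n^d$ monomials $\prod_{k=1}^d x_{k,j_k}$. There is nothing to add or correct.
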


By using this new proposition and Propostion~\ref{prop:rank-bd-tensor}, we have the following
corollary:

\begin{corollary}\label{cor:dim-hom-sm-ten}
	Let $\bx = (\bx_1,\ldots, \bx_d)$ be a set of $nd$ variables, partitioned into $d$ 
	sets of $n$ variables each,
	denoted by $\bx_i$. Additionally, let $S_f \sqcup S_g = [d]$ be a partition of the set 
	$[d]$ such that 
	$|S_f| = d_f$ and $|S_g| = d_g$. 
	If $\bf(\bx), \bg(\bx) \in \F[\bx]^m$ are vectors of homogeneous set-multilinear polynomials, 
	where $\bf(\bx)$ is partitioned with respect to the variables $(\bx_i)_{i \in S_f}$ 
	and $\bg(\bx)$ is partitioned with respect to the variables $(\bx_i)_{i \in S_g}$, then we have:
	$$\rk(\cC(\bf(\bx) \otimes \bg(\bx))) \leq \min\left\{ n^{d_f}, n^{d_g} \right\}.$$
\end{corollary}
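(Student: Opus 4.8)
The plan is to combine the preceding proposition (bounding $\dim(\cC(\bf(\bx)))$ for a vector of homogeneous set-multilinear polynomials) with Proposition~\ref{prop:rank-bd-tensor}, exactly as the corollaries \ref{cor:dim-hom-ten} for the homogeneous case was obtained. The first step is to observe that $\cC(\bf(\bx)\otimes\bg(\bx))$ lives inside $\F^{m\times m}$ and is spanned by the coefficient matrices of the symbolic matrix $\bf(\bx)\bg(\bx)^{\top}$; I want to show it is in fact contained in $\spn\bigl(\cC(\bf(\bx))\otimes\cC(\bg(\bx))\bigr)$. This is the key structural claim: expanding $\bf(\bx) = \sum_{\be} F_\be \bx^\be$ and $\bg(\bx) = \sum_{\bc} G_\bc \bx^\bc$, every coefficient matrix of the product is a sum of terms $F_\be \otimes G_\bc$ over pairs $(\be,\bc)$ whose monomials multiply to the given one, so each such coefficient matrix lies in $\spn(\{F_\be\}\otimes\{G_\bc\}) = \spn(\cC(\bf)\otimes\cC(\bg))$. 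Hence $\cC(\bf\otimes\bg)\subseteq \spn(\cC(\bf)\otimes\cC(\bg))$.

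Next I would invoke Proposition~\ref{prop:rank-bd-tensor} with $U = \cC(\bf(\bx))$ and $V = \cC(\bg(\bx))$: any vector space of matrices contained in $\spn(U\otimes V)$ has rank at most $\min(\dim U,\dim V)$. (Strictly, Proposition~\ref{prop:rank-bd-tensor} is stated for $\cM = \spn(U\otimes V)$, but the argument in its proof only uses the containment $\cM\subseteq\spn(U\otimes V)$, since bounding $\rk(M)$ for every $M$ in the larger space bounds it for every $M$ in $\cC(\bf\otimes\bg)$; alternatively apply it to the full span.) This gives
$$\rk(\cC(\bf(\bx)\otimes\bg(\bx))) \leq \min\{\dim(\cC(\bf(\bx))),\ \dim(\cC(\bg(\bx)))\}.$$
Finally, apply the preceding proposition to each factor: since $\bf(\bx)$ is homogeneous and set-multilinear with respect to the $d_f$ variable blocks $(\bx_i)_{i\in S_f}$ (each of size $n$), we get $\dim(\cC(\bf(\bx)))\leq n^{d_f}$, and symmetrically $\dim(\cC(\bg(\bx)))\leq n^{d_g}$. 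Substituting yields the claimed bound $\min\{n^{d_f}, n^{d_g}\}$.

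The only mildly delicate point — and the step I would be most careful about — is the first one: verifying that the coefficient matrices of the product $\bf(\bx)\bg(\bx)^{\top}$ really do lie in $\spn(\cC(\bf)\otimes\cC(\bg))$, rather than in something larger. This is where set-multilinearity plays no role at all (it only enters through the dimension bounds on the factors); what matters is just that a coefficient of a product of polynomials is a finite $\F$-linear combination of (tensor products of) coefficients of the factors, which is immediate from bilinearity of the product map on coefficient vectors. I do not anticipate any real obstacle; the corollary is a formal consequence of the proposition and Proposition~\ref{prop:rank-bd-tensor}, with the product-of-coefficients observation being the one-line glue. If one wants to be fully rigorous about applying Proposition~\ref{prop:rank-bd-tensor} to a subspace rather than the full span $\spn(U\otimes V)$, the cleanest route is to note $\rk(\cC(\bf\otimes\bg))\leq \rk(\spn(\cC(\bf)\otimes\cC(\bg)))$ and then apply the proposition verbatim.
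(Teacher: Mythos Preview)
Your proposal is correct and follows exactly the approach indicated by the paper, which does not spell out a proof but simply states that the corollary follows ``by using this new proposition and Proposition~\ref{prop:rank-bd-tensor}''. You have filled in precisely the glue the paper leaves implicit: the containment $\cC(\bf\otimes\bg)\subseteq\spn(\cC(\bf)\otimes\cC(\bg))$, then Proposition~\ref{prop:rank-bd-tensor}, then the dimension bound from the preceding proposition.
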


It is important to observe here that the bound of Corollary~\ref{cor:dim-hom-sm-ten} is 
better than the bound obtained in Corollary~\ref{cor:dim-hom-ten}. To see this, notice that
the number of variables in the setting of Corollary~\ref{cor:dim-hom-sm-ten} is $nd$, and
the degrees of the vectors of polynomials $\bf(\bx)$ and $\bg(\bx)$ are $d_f, d_g$, 
respectively. By using the bounds of Corollary~\ref{cor:dim-hom-ten} we would obtain 
an upper bound of 
$$ \min\left\{  \binom{nd+d_f-1}{nd-1}, \binom{nd+d_g-1}{nd-1} \right\}, $$
and thus weaker than the bound obtained in Corollary~\ref{cor:dim-hom-sm-ten}.

\section{Restricted Forms of Symbolic Matrix Rank Decompositions}\label{sec:matrix-decomp}

If some matrix $M$ over a field $\F$ has rank $r$, then we can write $M$ 
as sum of $r$ matrices $M = M_1 + \ldots + M_r$, where each $M_i$ 
is a rank one matrix over $\F$, and thus can be written as $M_i = \bu_i \otimes \bv_i$, where
$\bu_i, \bv_i$ are vectors over $\F$. In this section we would like to discus what happens when 
we impose additional conditions on the matrix $M$ and on the rank one matrices $M_i$. 

For instance, let $M(\bx) \in \mat_m(\F[\bx])$ be a matrix of homogeneous polynomials of 
degree $d$ such that $\rk_{\F(\bx)}(M) = r$. We want to know 
the minimal $r'$ such that $M(\bx)$ can be written as sum of $r'$ matrices $M_i(\bx)$ of 
rank one, where each $M_i(\bx)$ decomposes as $\bu_i(\bx) \otimes \bv_i(\bx)$ for
$\bu_i(\bx), \bv_i(\bx) \in \F[\bx]^m$ being vectors of homogeneous polynomials. Notice
that this decomposition imposes the condition that the vectors $\bu_i(\bx), \bv_i(\bx)$
be vectors of polynomials, whereas in the general rank decomposition these vectors could
be vectors of rational functions, that is, elements of $\F(\bx)^m$.

In this section, we define some non-standard notions of rank, along with some properties 
which will be useful to us in the next sections. We begin with the definition of homogeneous rank.

\subsection{Homogeneous Rank}

In this section, we define homogeneous rank and then show some interesting properties of 
such decomposition.

\begin{definition}[Homogeneous Rank]
	Let $M(\bx) \in \mat_m(\F[\bx])$ be a matrix of homogeneous polynomials of degree $d$.
	The homogeneous rank of $M(\bx)$, denoted by $\hrk(M(\bx))$ is the minimum $r$ such that 
	$$ M(\bx) = \sum_{i=1}^r \bu_i(\bx) \otimes \bv_i(\bx), $$ 
	where each $\bu_i(\bx), \bv_i(\bx) \in \F[\bx]^m$ is a vector whose entries are 
	homogeneous polynomials of the same degree ($d_{u_i}$ and $d_{v_i}$, respectively). 
\end{definition}

Let $M(\bx) \in \mat_m(\F[\bx])$ be a matrix whose entries are homogeneous polynomials of degree $d$. 
The following lemma shows that if $\rk(M(\bx)) = r$, then it can be written as the homogeneous 
component of degree $d$ of a sum of $r$ rank one matrices with polynomial entries.

\begin{lemma}[Symbolic Matrix Decomposition Lemma]\label{lem:symbolic-decomposition} 
	Let $M(\bx) \in \mat_m(\F[\bx])$ be a matrix of homogeneous polynomials of degree $d$. 
	If $\rk_{\F(\bx)}(M(\bx)) = r$ then there are vectors
	$\bf_1(\bx), \ldots, \bf_r(\bx) \in \F[\bx]^m$ and 
	$\bg_1(\bx), \ldots, \bg_r(\bx) \in \F[\bx]^m$ such that  
	\begin{equation*}
		M(\bx) = \sum_{i=1}^r H_d[\bf_i(\bx) \otimes \bg_i(\bx)].
	\end{equation*}
\end{lemma}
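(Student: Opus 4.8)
The plan is to start from the rational-function rank decomposition guaranteed by the hypothesis $\rk_{\F(\bx)}(M(\bx)) = r$, and then ``clear denominators'' and extract homogeneous parts to get a polynomial decomposition whose degree-$d$ part recovers $M(\bx)$. Concretely, since $M(\bx)$ has rank $r$ over the field of rational functions $\F(\bx)$, we can write $M(\bx) = \sum_{i=1}^r \bp_i(\bx) \otimes \bq_i(\bx)$ where $\bp_i(\bx), \bq_i(\bx) \in \F(\bx)^m$. Each entry of each $\bp_i$ and $\bq_i$ is a ratio of polynomials; let $h(\bx)$ be a common denominator of all entries of all the $\bp_i$ (one can even take the same denominator for the $\bq_i$, or handle them symmetrically). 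Then $h(\bx) \bp_i(\bx)$ is a vector of polynomials, and multiplying the identity by $h(\bx)^2$ gives $h(\bx)^2 M(\bx) = \sum_{i=1}^r (h \bp_i)(\bx) \otimes (h \bq_i)(\bx)$, an identity among matrices of honest polynomials. Write $\wt{\bf}_i(\bx) = h(\bx)\bp_i(\bx)$ and $\wt{\bg}_i(\bx) = h(\bx)\bq_i(\bx)$, so $h^2 M = \sum_i \wt{\bf}_i \otimes \wt{\bg}_i$.

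The next step is to pass to homogeneous components. Since $M(\bx)$ is homogeneous of degree $d$, the matrix $h(\bx)^2 M(\bx)$ is a matrix whose entries are each homogeneous only after multiplication — more precisely, $H_t[h^2 M] = H_{t - \deg'}[h^2] \cdot M$ is nonzero only when we pick out the right total degree. The clean way to proceed: let $e$ be such that $H_e[h]$ is the lowest-degree (or any fixed) nonzero homogeneous part of $h$; better, work with $h$ replaced by its lowest homogeneous component $h_0 := H_{v}[h]$ where $v = \mindeg(h)$, which is a nonzero homogeneous polynomial. Taking the homogeneous component of degree $2v + d$ on both sides of $h^2 M = \sum_i \wt{\bf}_i \otimes \wt{\bg}_i$ and using that the lowest-degree part of $h^2 M$ is exactly $h_0^2 \cdot M$ (as $M$ is homogeneous of degree $d$), we get $h_0(\bx)^2 \, M(\bx) = \sum_{i=1}^r H_{2v+d}[\wt{\bf}_i(\bx) \otimes \wt{\bg}_i(\bx)]$. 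Now, since $h_0$ is a fixed nonzero homogeneous polynomial, we would like to ``divide'' by $h_0^2$; but $M(\bx)$ need not be divisible by $h_0^2$ entry-wise as a polynomial identity $\Rightarrow$ instead, we divide inside the field of rational functions: for each $i$, the entries of $H_{2v+d}[\wt{\bf}_i \otimes \wt{\bg}_i]/h_0^2$, summed over $i$, equal $M(\bx)$, which is a polynomial; but individually they may not be. This is the main obstacle.

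To overcome this, the cleaner route — and the one I expect the authors take — is to not insist on a single step. Start again from $M(\bx) = \sum_i \bp_i \otimes \bq_i$ over $\F(\bx)$. Each $\bp_i, \bq_i$ is a vector of rational functions; write each entry in lowest terms and expand as a formal power series (or Laurent-type expansion) — equivalently, note that there is a single nonzero $a \in \F$ and a direction along which all denominators are invertible, but more robustly: replace each rational function $p/q$ by its Taylor expansion around a generic point, or simply observe that $\bp_i(\bx) = h(\bx)^{-1} \wt{\bf}_i(\bx)$ with $\wt{\bf}_i$ polynomial, so that $\bp_i \otimes \bq_i = h^{-2}(\wt{\bf}_i \otimes \wt{\bg}_i)$, and since $\sum_i h^{-2}(\wt{\bf}_i \otimes \wt{\bg}_i) = M$ is polynomial, multiplying through by $h^2$ and extracting the degree-($d + 2v$) homogeneous part as above yields $h_0^2 M = \sum_i H_{d+2v}[\wt{\bf}_i \otimes \wt{\bg}_i]$. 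Finally, set $\bf_i(\bx)$ to be the vector obtained by replacing each coordinate polynomial of $\wt{\bf}_i$ by a suitable homogeneous piece and absorbing $h_0$: formally, I would argue that $h_0^2 M$ itself admits a rank-$r$ decomposition into homogeneous-vector tensor products $H_{d+2v}[\,\cdot\,]$, and then invoke a separate scaling lemma to strip off the homogeneous factor $h_0^2$ — for instance, since $\F$ is algebraically closed we may apply an invertible linear change of variables so that $h_0(\bx) = x_1^{2v}$ up to scalar (as $h_0$ is a product of linear forms in a suitable coordinate system... which fails if $h_0$ is not a product of linear forms).

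Given that last parenthetical failure mode, the correct final move is instead: set $\bf_i := \wt{\bf}_i$ and $\bg_i := \wt{\bg}_i$ but work with the decomposition modulo the choice of $h_0$ not at all — simply take $\bf_i(\bx) := H_{\le ?}[\cdots]$... The honest clean statement, which I believe is what the proof delivers, is: the vectors $\bf_i, \bg_i$ are taken to be $\wt{\bf}_i / (\text{monomial})$ only when possible, and otherwise one accepts $r$ replaced by a slightly larger number — but the lemma as stated keeps $r$. Therefore the actual argument must be that $h(\bx)$ can be chosen to be a \emph{nonzero constant} after a generic linear change of coordinates: indeed, the locus where $\det$ of the relevant $r \times r$ submatrix (certifying rank $r$) vanishes is a proper subvariety, so at a generic point $\ba$ it is nonzero, and translating $\bx \mapsto \bx + \ba$ we may assume the denominators are units at the origin, i.e. have nonzero constant term; but they are still not constant. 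So the true resolution is: with denominators having nonzero constant term, expand $\bp_i(\bx)$ and $\bq_i(\bx)$ as formal power series, truncate at degree $d$ (everything of higher degree in the tensor products contributes to $H_{>d}$ and is irrelevant after applying $H_d$), obtaining polynomial vectors $\bf_i, \bg_i$ of degree $\le d$, and then $H_d[\sum_i \bf_i \otimes \bg_i] = H_d[M] = M$. I expect the main obstacle — and the place the proof must be careful — is precisely this denominator-clearing / power-series-truncation step, making sure the homogeneous structure of the final $\bf_i, \bg_i$ (as required by the subsequent use toward $\hrk$) is respected and that $r$ is not increased.
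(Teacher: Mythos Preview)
Your final resolution is exactly the paper's argument: shift by a point $\ba$ at which all denominators are nonzero, so that after the substitution $\bx \mapsto \bx+\ba$ each $1/t_i$ has nonzero constant term and can be expanded as a geometric series in $\hat t_i$; truncate at degree $d$ to get polynomial vectors, and then use homogeneity of $M$ to recover $M(\bx) = H_d[M(\bx+\ba)] = \sum_{i=1}^r H_d[\bf_i \otimes \bg_i]$. Your earlier attempts via $h_0^2 M$ are indeed dead ends for the reason you identified, and your closing worry is misplaced: the lemma as stated does \emph{not} require $\bf_i,\bg_i$ to be homogeneous (that refinement, at the cost of the factor $d+1$, is the content of the next lemma).
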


\begin{proof}
	Since $\rk_{\F(\bx)}(M(\bx)) = r$, there exist $r$ pairs of
	vectors of polynomials $\bp_i(\bx), \bq_i(\bx) \in \F[\bx]^m$ and nonzero polynomials 
	$t_i(\bx) \in \F[\bx]$ such that 
	\begin{equation*}
		M(\bx) = \sum_{i=1}^r \dfrac{1}{t_i(\bx)} \bp_i(\bx) \otimes \bq_i(\bx).
	\end{equation*}
	
	Since $t_i(\bx)$ are nonzero polynomials for all $i \in [r]$, the polynomial given by $Q(\bx) = \dst\prod_{i=1}^r t_i(\bx)$
	is a nonzero polynomial. By $\char(\F) = 0$ and Lemma~\ref{lem:schwartz-zippel}, there exists 
	$\ba \in \F^n$ such that
	$Q(\ba) \neq 0$. In particular, this implies that we can write $t_i(\bx + \ba) = b_i \cdot (1 - \hat{t}_i(\bx))$,
	where $b_i \in \F$ are nonzero field elements and 
	$\hat{t}_i(\bx)$ are polynomials such that $\hat{t}_i(\bo) = 0$. Namely, the constant
	terms of $\hat{t}_i(\bx)$ are zero, for all $i \in [r]$.  
	
	Writing $\widehat{\bp}_i(\bx) = \bp_i(\bx + \ba)$, 
	$\widehat{\bq}_i(\bx) = \bq_i(\bx + \ba)$, and from 
	the power series expansion of $1/(1 - x)$, it follows that 
	
	\begin{align*}
		M(\bx + \ba) 
		&= \sum_{i=1}^r \dfrac{1}{t_i(\bx+\ba)} \widehat{\bp}_i(\bx)  \otimes \widehat{\bq}_i(\bx) \\
		&= \sum_{i=1}^r 	\dfrac{1}{b_i \cdot (1- \hat{t}_i(\bx))}	\widehat{\bp}_i(\bx)  \otimes \widehat{\bq}_i(\bx) \\
		&= \sum_{i=1}^r \dfrac{1}{b_i} \left[	\widehat{\bp}_i(\bx)  \otimes \widehat{\bq}_i(\bx) \right] 
		\cdot \left( \sum_{j=0}^\infty \hat{t}_i(\bx)^j \right). 
	\end{align*}		
	As $M(\bx + \ba)$ is a matrix of polynomials of degree no larger than $d$, the equality above becomes:
	\begin{align*}
		M(\bx + \ba) &= H_{\leq d}[M(\bx + \ba)] \\
		&= H_{\leq d} \left\{ \sum_{i=1}^r \dfrac{1}{b_i} \left[	\widehat{\bp}_i(\bx)  \otimes \widehat{\bq}_i(\bx) \right] 
		\cdot \left( \sum_{j=0}^\infty \hat{t}_i(\bx)^j \right) \right\} \\ 
		&= H_{\leq d} \left\{ \sum_{i=1}^r \dfrac{1}{b_i} \left[	\widehat{\bp}_i(\bx)  \otimes \widehat{\bq}_i(\bx) \right] 
		\cdot \left( \sum_{j=0}^d \hat{t}_i(\bx)^j \right) \right\} \\
		&= \sum_{i=1}^r H_{\leq d}[ \wt{\bp}_i(\bx) \otimes \wt{\bq}_i(\bx) ], 
	\end{align*}	
	where $\wt{\bp}_i(\bx) = \dfrac{1}{b_i} \widehat{\bp}_i(\bx)$
	and $\dst \wt{\bq}_i(\bx) = \widehat{\bq}_i(\bx) \cdot \left( \sum_{j=0}^d \hat{t}_i(\bx)^j \right)$.
	
	Moreover, from homogeneity of $M(\bx)$, we have $M(\bx) = H_d[M(\bx + \ba)]$, which implies
	\begin{align*}
		M(\bx) = H_d[M(\bx + \ba)] 
		= \sum_{i=1}^r H_d[ \wt{\bp}_i(\bx) \otimes \wt{\bq}_i(\bx) ].
	\end{align*}			
	Taking $\bf_i(\bx) = \wt{\bp}_i(\bx)$ and $\bg_i(\bx) = \wt{\bq}_i(\bx)$ completes the proof.
\end{proof}

The following lemma uses the decomposition above to prove that, essentially, if a matrix 
whose entries are homogeneous polynomials has small rank then such a matrix also has small homogeneous rank.

\begin{lemma}[Matrix Rank Over Polynomial Rings]\label{lem:poly-rank}
	Let $M(\bx) \in \mat_m(\F[\bx])$ be a matrix with polynomial entries such that
	each entry $M_{ij}(\bx)$ is a homogeneous polynomial of degree $d$. 
	$$ \text{If } \rk_{\F(\bx)}(M(\bx)) \leq r \text{ then } \hrk(M(\bx)) \leq r \cdot (d+1).$$
\end{lemma}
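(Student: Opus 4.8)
The plan is to feed the output of Lemma~\ref{lem:symbolic-decomposition} into a homogeneous-components bookkeeping argument. By that lemma, since $\rk_{\F(\bx)}(M(\bx)) \le r$, we can write
$$ M(\bx) = \sum_{i=1}^r H_d[\bf_i(\bx) \otimes \bg_i(\bx)] $$
for some vectors $\bf_i(\bx), \bg_i(\bx) \in \F[\bx]^m$ which are, a priori, \emph{arbitrary} (not homogeneous). The goal is now purely to reorganize each summand into homogeneous rank-one pieces, so the only real content is to count how many pieces appear.

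First I would split each $\bf_i$ and $\bg_i$ into homogeneous components, writing $\bf_i(\bx) = \sum_{a \ge 0} \bf_i^{(a)}(\bx)$ and $\bg_i(\bx) = \sum_{b \ge 0} \bg_i^{(b)}(\bx)$, where $\bf_i^{(a)}$ is the vector of degree-$a$ homogeneous parts (i.e.\ $\bf_i^{(a)} = H_a[\bf_i]$ applied entrywise), and similarly for $\bg_i^{(b)}$. Then the entries of $\bf_i^{(a)}(\bx) \otimes \bg_i^{(b)}(\bx)$ are homogeneous of degree $a+b$, so taking the degree-$d$ homogeneous part of the product kills every cross term except those with $a+b=d$:
$$ H_d[\bf_i(\bx) \otimes \bg_i(\bx)] = \sum_{a=0}^{d} \bf_i^{(a)}(\bx) \otimes \bg_i^{(d-a)}(\bx). $$
(Here I am implicitly using that $M(\bx)$ has degree $d$, so components with $a > d$ or $d-a < 0$ contribute nothing; the sum genuinely ranges over $a \in \{0, 1, \dots, d\}$, giving at most $d+1$ terms.) Each term on the right is a rank-one matrix $\bu \otimes \bv$ with $\bu = \bf_i^{(a)}(\bx)$ homogeneous of degree $a$ and $\bv = \bg_i^{(d-a)}(\bx)$ homogeneous of degree $d-a$, which is exactly the shape required by the definition of homogeneous rank.

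Finally I would sum over $i \in [r]$: this exhibits $M(\bx)$ as a sum of at most $r(d+1)$ rank-one matrices of the form $\bu_j(\bx) \otimes \bv_j(\bx)$ with $\bu_j, \bv_j$ homogeneous (of complementary degrees summing to $d$), establishing $\hrk(M(\bx)) \le r(d+1)$. I do not expect a genuine obstacle here — the argument is just Lemma~\ref{lem:symbolic-decomposition} plus the observation that a bilinear product of polynomials of degrees in $\{0,\dots,d\}$ has only $d+1$ ways to land in degree exactly $d$; the one point to state carefully is that zero terms (where a homogeneous component vanishes) are simply dropped, which only helps the bound.
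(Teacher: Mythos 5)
Your proof is correct and takes essentially the same route as the paper: invoke Lemma~\ref{lem:symbolic-decomposition} to get $M(\bx) = \sum_{i=1}^r H_d[\bf_i \otimes \bg_i]$, expand each $\bf_i, \bg_i$ into homogeneous components, and observe that $H_d$ of the product keeps only the $d+1$ pairings with $a+b=d$. The bookkeeping (dropping vanishing components, counting $r(d+1)$ rank-one homogeneous terms) matches the paper's argument exactly.
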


\begin{proof}
	W.l.o.g., we can assume that $\rk_{\F(\bx)}(M(\bx)) = r$.  From 
	Lemma~\ref{lem:symbolic-decomposition}, there exist vectors of polynomials 
	$\bp_1(\bx), \bq_1(\bx), \ldots, \bp_r(\bx), \bq_r(\bx) \in \F[\bx]^m$ such that
	
	\begin{equation}\label{eq:hom-d-regular-1}
		M(\bx) = \sum_{i=1}^r H_d[\bp_i(\bx) \otimes \bq_i(\bx)].
	\end{equation}

	Decomposing equality~\eqref{eq:hom-d-regular-1} into its homogeneous components, we obtain:
	\begin{align*}
		M(\bx)
		= \sum_{i=1}^r H_d[ \bp_i(\bx) \otimes \bq_i(\bx) ]
		= \sum_{i=1}^r \sum_{k=0}^d  H_k[\bp_i(\bx)] \otimes H_{d-k}[\bq_i(\bx)].
	\end{align*}	
	The last line of the equality above gives us the decomposition of $M(\bx)$ into 
	$\leq r \cdot (d+1)$ rank-1 polynomial matrices.
\end{proof}

%
%

\subsection{Set-Multilinear Rank}

While the decomposition of a matrix with polynomial entries into homogeneous rank one matrices is an
important one, it may not be the best decomposition if the original matrix has additional
structure. An example is the decomposition of a 
set-multilinear polynomial matrix into set-multilinear rank one matrices. To that extent, 
we need the following notion:
 
\begin{definition}[Set-Multilinear Rank]
	Let $\bx = (\bx_1, \ldots, \bx_d)$ be a set of variables, partitioned into sets of variables
	$\bx_i$, and $M(\bx) \in \mat_m(\F[\bx])$ be a matrix with polynomial entries such that
	each entry $M_{ij}(\bx)$ is a homogeneous set-multilinear polynomial of degree $d$,
	where the partition is given by $\bx$. \\
	The {\em set-multilinear rank} of $M(\bx)$, denoted by $\smrk(M(\bx))$, is the 
	smallest integer $r$
	for which there exist $r$ pairs of vectors $\bf_i(\bx), \bg_i(\bx) \in \F[\bx]^m$ such that
	\begin{equation}\label{eq:sm-decomposition}
		M(\bx) = \sum_{i=1}^{r} \bf_i(\bx) \otimes \bg_i(\bx), 
	\end{equation}	  
	where: 
	\begin{itemize} 
		\item $\bf_i(\bx)$ and $\bg_i(\bx)$ are homogeneous vectors of
		set-multilinear polynomials, 
		\item for each $i \in [r]$, there exists a partition $S_f^i \sqcup S_g^i = [d]$
		of  the set $[d]$ such that $\bf_i(\bx)$ is set-multilinear with respect to the variables
		$(\bx_j)_{j \in S_f^i}$ and $\bg_i(\bx)$ is set-multilinear with respect to the variables 
		$(\bx_j)_{j \in S_g^i}$. 
	\end{itemize}
	In particular, $\deg(\bf_i(\bx)) + \deg(\bg_i(\bx)) = d$. 
\end{definition}

With this concept of set-multilinear decomposition, we obtain the following relationship
between the symbolic rank and the set-multilinear rank of a set-multilinear polynomial matrix. 

\begin{lemma}[Set-Multilinear Rank of Polynomial Matrices]\label{lem:poly-rank-set-multilinear}
	Let $M(\bx) \in \mat_m(\F[\bx])$ be a set-multilinear matrix of degree $d$,
	with partition $\bx = (\bx_1, \ldots, \bx_d)$. 
	$$ \text{If } \rk_{\F(\bx)}(M(\bx)) \leq r \text{ then } \smrk(M(\bx)) \leq r \cdot 2^d. $$
\end{lemma}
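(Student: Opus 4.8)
The plan is to mimic the structure of Lemma~\ref{lem:poly-rank} but to replace the single translation-and-power-series argument with one that respects the set-multilinear grading, then to decompose the resulting polynomial vectors more carefully so that the $d+1$ factor coming from the homogeneous degree decomposition is instead a $2^d$ factor coming from the finer set-multilinear (sub-partition) decomposition. Concretely, I would first apply Lemma~\ref{lem:symbolic-decomposition} to obtain vectors of polynomials $\bp_i(\bx),\bq_i(\bx)\in\F[\bx]^m$, $i\in[r]$, with $M(\bx)=\sum_{i=1}^r H_d[\bp_i(\bx)\otimes \bq_i(\bx)]$. The first obstacle is that Lemma~\ref{lem:symbolic-decomposition} only guarantees homogeneous structure, not set-multilinear structure; so I would then project each $H_d[\bp_i(\bx)\otimes\bq_i(\bx)]$ onto its set-multilinear component with respect to the full partition $[d]$. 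Since $M(\bx)$ is itself set-multilinear of degree $d$, it is unchanged by this projection, so $M(\bx)=\sum_{i=1}^r H^{SM}_{[d]}\big[\bp_i(\bx)\otimes\bq_i(\bx)\big]$.

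Next I would expand each term by its sub-partition structure. For a fixed $i$, write $\bp_i=\sum_{S\subseteq[d]} H^{SM}_S[\bp_i]$ and $\bq_i=\sum_{T\subseteq[d]} H^{SM}_T[\bq_i]$ (each of these is a vector of homogeneous set-multilinear polynomials). Then the set-multilinear degree-$d$, full-partition component of $\bp_i\otimes\bq_i$ picks out exactly those cross terms with $S\sqcup T=[d]$, i.e.
\[
H^{SM}_{[d]}\big[\bp_i(\bx)\otimes\bq_i(\bx)\big]=\sum_{S\subseteq[d]} H^{SM}_S[\bp_i(\bx)]\otimes H^{SM}_{[d]\setminus S}[\bq_i(\bx)].
\]
Each summand on the right is of the form $\bf\otimes\bg$ with $\bf$ a vector of homogeneous set-multilinear polynomials on $(\bx_j)_{j\in S}$ and $\bg$ a vector of homogeneous set-multilinear polynomials on $(\bx_j)_{j\in[d]\setminus S}$, which is exactly a valid term in the set-multilinear-rank decomposition~\eqref{eq:sm-decomposition}. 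Summing over $i\in[r]$ and over the $2^d$ subsets $S\subseteq[d]$ gives
\[
M(\bx)=\sum_{i=1}^r\ \sum_{S\subseteq[d]} H^{SM}_S[\bp_i(\bx)]\otimes H^{SM}_{[d]\setminus S}[\bq_i(\bx)],
\]
a sum of at most $r\cdot 2^d$ rank-one set-multilinear matrices, whence $\smrk(M(\bx))\le r\cdot 2^d$.

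The step I expect to be the most delicate is verifying that the set-multilinear projection $H^{SM}_{[d]}[\cdot]$ behaves correctly on the tensor product $\bp_i\otimes\bq_i$ and, in particular, that it is linear over the $i$-indexed sum and distributes into the bilinear expansion above — this is where one must be careful that multiplying a monomial from $\bp_i$ by a monomial from $\bq_i$ is set-multilinear with respect to $[d]$ exactly when the two monomials use disjoint variable blocks whose union is all of $[d]$ (any repeated or missing block kills the term). Once this bookkeeping is in place, everything else is routine: the homogeneity claim $\deg(\bf_i)+\deg(\bg_i)=d$ is immediate from $S\sqcup([d]\setminus S)=[d]$, and no appeal to the degree $d$ bound or to Schwartz–Zippel is needed beyond what Lemma~\ref{lem:symbolic-decomposition} already used. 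As with Lemma~\ref{lem:poly-rank}, the assumption $\rk_{\F(\bx)}(M(\bx))\le r$ may be taken to be an equality without loss of generality.
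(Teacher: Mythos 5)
Your proposal is correct and follows essentially the same route as the paper's proof: apply Lemma~\ref{lem:symbolic-decomposition} to get $M(\bx)=\sum_i H_d[\bp_i\otimes\bq_i]$, use set-multilinearity of $M$ to replace $H_d$ by $H^{SM}_{[d]}$, and then expand each $H^{SM}_{[d]}[\bp_i\otimes\bq_i]$ as $\sum_{S\subseteq[d]}H^{SM}_S[\bp_i]\otimes H^{SM}_{[d]\setminus S}[\bq_i]$ to obtain $r\cdot 2^d$ rank-one set-multilinear terms. Your version just spells out the projection and bilinear-expansion bookkeeping a bit more explicitly than the paper does.
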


\begin{proof}
		W.l.o.g., we can assume that $\rk_{\F(\bx)}(M(\bx)) = r$.  From 
		Lemma~\ref{lem:symbolic-decomposition}, there exist vectors of polynomials 
		$\bp_1(\bx), \bq_1(\bx), \ldots, \bp_r(\bx), \bq_r(\bx) \in \F[\bx]^m$ such that
	
	\begin{equation}\label{eq:hom-d-regular-2}
		M(\bx) = \sum_{i=1}^r H_d[\bp_i(\bx) \otimes \bq_i(\bx)].
	\end{equation}	
	
	Decomposing equality~\eqref{eq:hom-d-regular-2} into its homogeneous and set multilinear 
	components, according to the partition $\bx = (\bx_1, \ldots, \bx_d)$ we obtain:
	\begin{align*}
		M(\bx)
		= \sum_{i=1}^r H_{[d]}^{SM}[ \bp_i(\bx) \otimes \bq_i(\bx) ]
		= \sum_{i=1}^r \sum_{S \subseteq [d]}  H_S^{SM}[\bp_i(\bx)] \otimes 
		H_{[d] \setminus S}^{SM}[\bq_i(\bx)].
	\end{align*}	
	The last line of the equality above giving us the decomposition of $M(\bx)$ into 
	$R \leq r \cdot 2^d$ rank-1 polynomial matrices.
\end{proof}

Note that the set-multilinear decomposition of Lemma~\ref{lem:poly-rank-set-multilinear}
is much more stringent than the homogeneous decomposition of Lemma~\ref{lem:poly-rank}, and therefore gives
seemingly worse bounds. However, as we will see in Section~\ref{sec:rank-bounds}, the set-multilinear 
decomposition shown in equation~\eqref{eq:sm-decomposition} will turn out to be better than the 
homogeneous one (when applicable). This is due to the vectors of polynomials $\bf_i(\bx), \bg_i(\bx)$ 
in~\eqref{eq:sm-decomposition} being much simpler than the ones obtained in the homogeneous decomposition.
In particular, the set multilinearity of $\bf_i(\bx), \bg_i(\bx)$ in~\eqref{eq:sm-decomposition} will yield simpler
coefficient spaces, and therefore better rank bounds on the coefficient space of $M(\bx)$.

\section{Rank Bounds}\label{sec:rank-bounds}

In this section, we show how the matrix decomposition techniques developed in 
Section~\ref{sec:matrix-decomp} can be used to establish barriers to rank-based
methods used to prove lower bounds for tensor rank or for Waring rank and constant
depth circuits. We begin with the Waring rank of a homogeneous polynomial, which is 
defined as follows:

\begin{definition}[Waring Rank]
	Given a homogeneous polynomial $f(\bx) \in \F[\bx]$ in $n$ variables of degree $d$,
	its {\em Waring rank}, written $\wrk(f(\bx))$, is the minimum integer
	$r$ such that $f(\bx)$ can be written as a sum of $r$ $d^{th}$ powers of linear forms.
	That is, there exist linear polynomials $\ell_i(\bx) \in \F[\bx]$, for $1 \leq i \leq r$
	such that 
	$$ f(\bx) = \sum_{i=1}^r \ell_i(\bx)^d. $$ 
\end{definition} 
 
\subsection{Barriers to Waring Rank Lower Bounds}

In this section, we establish barriers for rank-based methods used to prove lower bounds
for the Waring rank of {\em any} polynomial. More precisely, we show
that any rank method which can be cast as a linear map $L : \F[\bx] \to \mat_m(\F)$ such that 
$\rk(L(\ell(\bx)^d)) \leq r$ for all powers of linear forms has the property that 
$\rk(L(f(\bx))) \leq r(d+1) \cdot \binom{n+\lfloor d/2 \rfloor}{n}$ for any polynomial 
$f(\bx)$. This, in turn, implies that such a rank technique cannot yield better lower bounds than 
$$\wrk(f(\bx)) \geq (d+1) \cdot \binom{n+\lfloor d/2 \rfloor}{n}$$ 
for any polynomial $f(\bx)$, as will be shown in Corollary~\ref{cor:waring-rank}.

Currently, the best known lower bounds for the Waring rank of an explicit polynomial in $n$
variables of degree $d$ is 
$$ \binom{n+ \lfloor d/2 \rfloor - 1}{n} + \lfloor n/2 \rfloor -1, $$
as shown in~\cite{GL17}, through a rank-based method. Note that the lower bound 
of~\cite{GL17} gets very close to the barrier that we prove for Waring rank. 

On the other hand, the Alexander-Hirschowitz theorem~\cite{AH95} tells us 
that a random homogeneous polynomial $f(\bx)$ on $n$ variables of degree $d$ has Waring rank 
$$ \wrk(f(\bx)) = \left\lceil \frac{1}{n} \cdot \binom{n+d-1}{n-1} \right\rceil. $$ 
Thus, this work shows that new 
techniques are needed to overcome the gap between the current lower bounds for explicit 
polynomials and the correct lower bound for the Waring rank of random polynomials.

\begin{theorem}[Waring Rank Upper Bounds]\label{thm:waring-rank}
	Let $m,n \in \N$ be positive integers and $L : \F[\bx] \to \mat_m(\F)$ be a linear
	map. If for every affine form $\ell(\bx)$ we have that $\rk(L(\ell^d)) \leq r$, then it holds that 
	$$\rk(L(f(\bx))) \leq r(d+1) \cdot \binom{n+ \lfloor d/2 \rfloor}{n} $$
	for any polynomial $f(\bx)$ of degree at most $d$. 
\end{theorem}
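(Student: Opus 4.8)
The plan is to follow the proof-overview strategy, specialized to the Waring setting. First I would set up the symbolic matrix. Consider the generic affine form $\ell(\bx) = y_0 + \sum_{i=1}^n y_i x_i$ in fresh variables $\by = (y_0, \ldots, y_n)$, and form the matrix polynomial $L(\ell^d) \in \mat_m(\F[\by])$ (this makes sense because $\ell^d$ is a polynomial in $\bx$ with coefficients that are polynomials in $\by$, and $L$ acts on the $\bx$-monomials). By hypothesis, every evaluation $\by \mapsto \ba \in \F^{n+1}$ gives a matrix of rank at most $r$, so by Lemma~\ref{lem:generic-rank} the symbolic rank $\rk_{\F(\by)}(L(\ell^d)) \le r$. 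To apply the homogeneous/polynomial decomposition lemmas I first homogenize: pass to $n+1$ variables and make $\ell$ a genuine linear form, so that $L(\ell^d)$ is a matrix of homogeneous polynomials of degree $d$ in $n+1$ variables. (One must be a little careful that $L$ is defined on $\poly_{n,d}$, i.e.\ degree-$\le d$ polynomials; working with the homogenization $\hat f$ and then specializing the homogenizing variable to $1$ recovers $f$, so it suffices to bound $\rk(L(\hat f))$ for homogeneous $\hat f$ of degree $d$ in $n+1$ variables.)

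Next I would apply Lemma~\ref{lem:poly-rank} to $M(\by) := L(\ell^d)$: since $\rk_{\F(\by)}(M) \le r$ and $M$ has homogeneous degree-$d$ entries, we get $\hrk(M) \le r(d+1)$, i.e.\ a decomposition
\begin{equation*}
M(\by) = \sum_{i=1}^{r'} \bu_i(\by) \otimes \bv_i(\by), \qquad r' \le r(d+1),
\end{equation*}
where each $\bu_i, \bv_i \in \F[\by]^m$ is a vector of homogeneous polynomials, with $\deg \bu_i + \deg \bv_i = d$. For each $i$, at least one of $\deg\bu_i, \deg\bv_i$ is $\le \lfloor d/2 \rfloor$; say $\deg\bu_i \le \lfloor d/2\rfloor$ (otherwise swap roles). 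Then the coefficient space $\cC(\bu_i(\by))$ has dimension at most $\binom{(n+1) + \lfloor d/2\rfloor - 1}{(n+1)-1} = \binom{n+\lfloor d/2\rfloor}{n}$ by the proposition bounding $\dim\cC$ of a homogeneous vector. By Proposition~\ref{prop:rank-bd-tensor} (or Corollary~\ref{cor:dim-hom-ten}), the space of matrices spanned by the coefficients of $\bu_i(\by)\otimes\bv_i(\by)$ has rank at most $\binom{n+\lfloor d/2\rfloor}{n}$.

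Now I collect the coefficient structure across all $i$. The matrix polynomial $M(\by)$ lies in the span of the coefficient matrices of the $\bu_i\otimes\bv_i$, so every coefficient matrix $M_\be$ of $M(\by) = L(\ell^d)$ — and hence, by linearity of $L$, the matrix $L(\bx^\be)$ for each monomial $\bx^\be$ — lies in $\sum_{i=1}^{r'}\cC(\bu_i\otimes\bv_i)$, a matrix space of rank at most $r' \cdot \binom{n+\lfloor d/2\rfloor}{n} \le r(d+1)\binom{n+\lfloor d/2\rfloor}{n}$ (rank is sub-additive over sums of matrix spaces, since each summand's rank bound comes from a one-sided span as in Proposition~\ref{prop:rank-bd-tensor}, and these combine). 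Since every degree-$\le d$ polynomial $f(\bx)$ is an $\F$-linear combination of monomials $\bx^\be$, linearity of $L$ gives $L(f(\bx)) \in \sum_i \cC(\bu_i\otimes\bv_i)$ as well, hence $\rk(L(f(\bx))) \le r(d+1)\binom{n+\lfloor d/2\rfloor}{n}$, as claimed.

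The main obstacle I anticipate is the bookkeeping around homogenization: $L$ is given on $\poly_{n,d}$ (degree-$\le d$ polynomials in $n$ variables), but the clean decomposition lemmas require genuinely homogeneous matrices of a fixed degree. One needs to either (a) extend $L$ to act on homogeneous degree-$d$ polynomials in $n+1$ variables via the homogenization map and check that the rank hypothesis $\rk(L(\ell^d))\le r$ is preserved for all homogeneous linear $\ell$ in $n+1$ variables (which it is, since $\ell^d$ dehomogenizes to a $d$-th power of an affine form), or (b) track the inhomogeneous pieces directly, noting that $H_d[\bp_i\otimes\bq_i] = \sum_k H_k[\bp_i]\otimes H_{d-k}[\bq_i]$ and that the relevant coefficient spaces of the lower-degree homogeneous parts are only smaller — so that the bound $\binom{n+\lfloor d/2\rfloor}{n}$ (which already dominates $\binom{n+j}{n}$ for $j \le \lfloor d/2\rfloor$) still applies. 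Getting the $+1$ in the binomial (i.e.\ $\binom{n+\lfloor d/2\rfloor}{n}$ rather than $\binom{n+\lfloor d/2\rfloor - 1}{n-1}$) exactly accounts for the extra homogenizing variable, and I would want to confirm that is where the slack between the barrier and the \cite{GL17} lower bound sits.
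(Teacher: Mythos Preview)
Your proposal is correct and follows essentially the same route as the paper's proof: form the symbolic matrix $M(\by)=L(\ell^d)$ for the generic affine form $\ell=y_0+\sum y_ix_i$, bound its symbolic rank by Lemma~\ref{lem:generic-rank}, apply Lemma~\ref{lem:poly-rank} to get $\hrk(M)\le r(d+1)$, and then use Corollary~\ref{cor:dim-hom-ten} on each rank-one piece together with $L(f)\in\cC(M)$. Your anticipated ``main obstacle'' is not one: since $\ell$ is already a genuine linear form in the $n+1$ variables $y_0,\ldots,y_n$, the entries of $M(\by)$ are automatically homogeneous of degree $d$ in $\by$, so no separate homogenization of $f$ in the $\bx$-variables is needed, and the $\binom{n+\lfloor d/2\rfloor}{n}$ arises exactly as you computed from the $n+1$ $\by$-variables.
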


\begin{proof} 
	Since $L$ is a linear map and $\{ \bx^\be \}_{\be \in \N^n}$ is a basis for the vector 
	space of polynomials 
	$\F[\bx]$, we can define $L$ by first defining the matrices $M_\be = L(\bx^\be)$ and 
	extend this definition to all polynomials by linearity.
	Now instead of looking on $L$ as a linear mapping we can look on it as a matrix of 
	polynomials in the following way: 
	For any affine form given by $\ell(\bx) = y_0 + \dst\sum_{i=1}^n y_i x_i$, we have that 
	$$ \ell(\bx)^d = d! \cdot 
	\dst\sum_{\substack{(e_0, \be) \in \N^{n+1} \\ \| (e_0, \be) \|_1 = d}} 
	\frac{\be!}{e_0!} \cdot (y_0^{e_0} \cdot \by^\be) \cdot \bx^\be $$	
	hence, the image of $\ell(\bx)$ under the map $L$ is given by
	$$M(y_0, \by) = L(\ell(\bx)^d) = d! \cdot
	\dst\sum_{\substack{(e_0, \be) \in \N^{n+1} \\ \| (e_0, \be) \|_1 = d}} 
	\frac{\be!}{e_0!} \cdot (y_0^{e_0} \cdot \by^\be) \cdot M_\be$$
	and thus $M(y_0, \by) = L(\ell(\bx)^d)$ is a polynomial matrix (over the variables 
	$y_0, y_1, \ldots, y_n$) where
	each entry is a homogeneous polynomial of degree $d$. By assumption, we have that 
	$\rk(M(a_0, \ba)) \leq r$ for
	any assignment $(a_0, \ba) \in \F^{n+1}$. Therefore, Lemma~\ref{lem:generic-rank} implies that 
	$\rk_{\F(y_0, \by)}(M(y_0, \by)) \leq r$.
	
	In this case, the conditions of Lemma~\ref{lem:poly-rank}
	apply and therefore there exist $R \leq r(d+1)$ vectors of polynomials 
	$\bf_i(y_0, \by), \bg_i(y_0, \by) \in \F[y_0, \by]^m$ such that
	\begin{equation*}
		M(y_0, \by) = \sum_{i=1}^R \bf_i(y_0, \by) \otimes \bg_i(y_0, \by)
	\end{equation*}
	Moreover, for all $i \in [R]$, $\bf_i(y_0, \by), \bg_i(y_0, \by)$ are vectors of 
	homogeneous polynomials such that 
	$\deg(\bf_i) + \deg(\bg_i) \leq d$. Hence, 
	$\min(\deg(\bf_i), \deg(\bg_i)) \leq \lfloor d/2 \rfloor$, for each 
	$i \in [R]$. This bound on the minimum degree, combined with 
	Corollary~\ref{cor:dim-hom-ten}, yields 
	$$\rk(\cC(\bf_i(y_0, \by) \otimes \bg_i(y_0, \by))) \leq 
	\binom{n+\lfloor d/2 \rfloor}{n}.$$
	
	As $\rk(\cC(M(y_0, \by))) \leq 
	\dst\sum_{i=1}^R \rk(\cC(\bf_i(y_0, \by) \otimes \bg_i(y_0, \by)))$,
	we obtain
	$$ \rk(\cC(M(y_0, \by))) \leq R \cdot \binom{n+\lfloor d/2 \rfloor}{n}. $$
	
	Now, to finish the proof, it is enough to show that $L(f(\bx)) \in \cC(M(y_0, \by))$ for any
	polynomial $f(\bx)$ of degree at most $d$. To do this, first notice that for any affine form 
	$\ell(\bx) = a_0 + \dst\sum_{i=1}^n a_i x_i$, where $(a_0, \ba) \in \F^{n+1}$, 
	we have that $L(\ell(\bx)^d) \in \cC(M(y_0, \by))$.
	By linearity of $L$ and the fact that the set of all affine forms of degree $d$ span 
	the space of all polynomials of degree at most $d$, we have:
	$$ L(f(\bx)) \in \spn\left\{ L(\ell(\bx)^d) \ \middle| \  
	\ell(\bx) = a_0 + \dst\sum_{i=1}^n a_i x_i \text{ and } (a_0, \ba) \in \F^{n+1}  \right\}.$$
	As the span of all matrices $L(\ell(\bx)^d)$ is contained in the space $\cC(M(y_0, \by))$,
	we have that $L(f(\bx)) \in \cC(M(y_0, \by))$, as we wanted. Consequently, we have
	$$ \rk(L(f(\bx))) \leq R \cdot \binom{n+\lfloor d/2 \rfloor}{n}. $$
\end{proof}

The theorem above implies the following upper bound on rank-based techniques.

\begin{corollary}\label{cor:waring-rank}
	Let $m,n \in \N$ be positive integers and $L : \F[\bx] \to \mat_m(\F)$ be a linear
	map. Then, rank 
	methods which use this linear map cannot prove lower bounds better than 
	$$ \wrk(f(\bx)) > (d+1) \cdot \binom{n+ \lfloor d/2 \rfloor}{n} $$ 
	for any polynomial $f(\bx)$ of degree at most $d$. 
\end{corollary}

\begin{proof}
	Let $s = \wrk(f(\bx))$ be the Waring rank of $f(\bx)$. In this case, we have 
	that
	$$ f(\bx) = \sum_{i=1}^s \ell_i(\bx)^d,  $$
	where each $\ell_i(\bx)$ is an affine form. 
	
	Let $r$ be an upper bound
	on the rank of $\ell(\bx)^d$, for any affine form $\ell(\bx)$.	
	By applying the map $L$ to both sides of the
	equation, we obtain
	\begin{align*}
		L(f(\bx)) = \sum_{i=1}^s L(\ell_i(\bx)^d) 
		&\then \rk(L(f(\bx))) \leq  \sum_{i=1}^s \rk(L(\ell_i(\bx)^d)) \leq s \cdot r \\
		&\then \frac{\rk(L(f(\bx)))}{r} \leq s = \wrk(f(\bx)).
	\end{align*}
	By Theorem~\ref{thm:waring-rank}, we have that 
	$\rk(L(f(\bx))) \leq r(d+1) \cdot \binom{n+ \lfloor d/2 \rfloor}{n}.$
\end{proof}

\subsection{Barriers to Tensor Rank Lower Bounds}

In this section, we prove limitations of rank-based methods which yield lower bounds for
the tensor rank of explicit tensors. Analogous to the Waring rank case, we show that
any linear map, denoted here by $L : \ten_{n,d}(\F) \to \mat_m(\F)$, for which 
$\rk(L(\bu_1 \otimes \cdots \otimes \bu_d)) \leq r$ for all rank one tensors has the property
that $\rk(L(T)) \leq r \cdot 2^d \cdot n^{\lfloor d/2 \rfloor}$ for any tensor 
$T \in \ten_{n,d}(\F).$ This in turn, implies that such a technique cannot yield better 
lower bounds than
$$ \rk(T) > 2^d \cdot n^{\lfloor d/2 \rfloor} $$
for any explicit tensor $T \in \ten_{n, d}(\F)$.

To put this matter into perspective, it is very easy to obtain explicit tensors 
$T \in \tennd$ whose tensor rank is lower bounded by $\rk(T) \geq n^{\lfloor d/2 \rfloor}$.
For instance, one can just take a full-rank matrix in $\mat_{n^{\lfloor d/2 \rfloor}}(\F)$.
Nevertheless, despite much work on tensor rank lower bounds, the best lower bounds for the rank
of explicit tensors are still of the form $\Omega(n^{\lfloor d/2 \rfloor})$, as seen in
the works~\cite{BI11, AFT11, L12}.

On the other hand, it is well-known, see for instance~\cite{Lan17}, that a random tensor has rank on the 
order of $\frac{n^{d-1}}{d}$. Thus, our paper shows that rank-based methods for proving tensor
rank lower bounds will not suffice to prove strong tensor lower bounds. We now state the main theorem
of this section. 

\begin{theorem}[Tensor Rank Upper Bounds]\label{thm:tensor-rank}
	Let $m,n \in \N$ be positive integers and $L : \ten_{n, d}(\F) \to \mat_m(\F)$ be a linear
	map such that each rank one tensor $\bu_1 \otimes \cdots \otimes \bu_d$ is mapped into 
	a matrix $L(\bu_1 \otimes \cdots \otimes \bu_d)$ such that 
	$$\rk(L(\bu_1 \otimes \cdots \otimes \bu_d)) \leq r. $$ 
	Then it holds that 
	$$\dst \rk(L(f)) \leq r \cdot 2^d \cdot n^{\lfloor d/2 \rfloor} $$
	for any tensor $f \in \ten_{n, d}(\F)$. 
\end{theorem}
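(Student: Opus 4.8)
The plan is to follow exactly the template of the proof of Theorem~\ref{thm:waring-rank}, but replacing the homogeneous decomposition (Lemma~\ref{lem:poly-rank}) by the set-multilinear one (Lemma~\ref{lem:poly-rank-set-multilinear}), which is available here precisely because the natural parametrization of rank-one tensors produces a \emph{set-multilinear} matrix of polynomials, and consequently using the sharper coefficient-space bound of Corollary~\ref{cor:dim-hom-sm-ten} rather than Corollary~\ref{cor:dim-hom-ten}.

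First I would fix the linear map by its values on the standard basis of $\ten_{n,d}(\F)$: set $M_{(j_1,\dots,j_d)} = L(\be_{j_1}\otimes\cdots\otimes\be_{j_d})$ for $(j_1,\dots,j_d)\in[n]^d$. Introduce $d$ disjoint blocks of variables $\by_i=(y_{i,1},\dots,y_{i,n})$, $i\in[d]$, let $\by=(\by_1,\dots,\by_d)$, and observe that a generic rank-one tensor $\bu_1\otimes\cdots\otimes\bu_d$, with $\bu_i$ having coordinates $\by_i$, is mapped to
\[
M(\by)\;=\;L(\bu_1\otimes\cdots\otimes\bu_d)\;=\;\sum_{(j_1,\dots,j_d)\in[n]^d} y_{1,j_1}\cdots y_{d,j_d}\cdot M_{(j_1,\dots,j_d)},
\]
whose entries are homogeneous set-multilinear polynomials of degree $d$ with respect to the partition $\by=(\by_1,\dots,\by_d)$. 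By hypothesis every evaluation $M(\ba)$ equals $L$ of an actual rank-one tensor, so $\rk_\F(M(\ba))\le r$ for all $\ba$, and Lemma~\ref{lem:generic-rank} then gives $\rk_{\F(\by)}(M(\by))\le r$ (the one place the size of $\F$ enters).

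Next I would apply Lemma~\ref{lem:poly-rank-set-multilinear} to write $M(\by)=\sum_{i=1}^{R}\bf_i(\by)\otimes\bg_i(\by)$ with $R\le r\cdot 2^d$, where each $\bf_i,\bg_i\in\F[\by]^m$ is a homogeneous set-multilinear vector of polynomials and there is a partition $S_f^i\sqcup S_g^i=[d]$ with $\deg(\bf_i)=|S_f^i|$ and $\deg(\bg_i)=|S_g^i|$; in particular $\min(\deg\bf_i,\deg\bg_i)\le\lfloor d/2\rfloor$. Corollary~\ref{cor:dim-hom-sm-ten} then bounds each summand's coefficient-space rank by $\min\{n^{|S_f^i|},n^{|S_g^i|}\}\le n^{\lfloor d/2\rfloor}$, and sub-additivity of $\rk(\cC(\cdot))$ yields
\[
\rk(\cC(M(\by)))\;\le\;\sum_{i=1}^{R}\rk\bigl(\cC(\bf_i(\by)\otimes\bg_i(\by))\bigr)\;\le\;R\cdot n^{\lfloor d/2\rfloor}\;\le\;r\cdot 2^d\cdot n^{\lfloor d/2\rfloor}.
\]
Finally I would close exactly as in the Waring case: every evaluation $M(\ba)=L(\bu_1\otimes\cdots\otimes\bu_d)$ is a linear combination of the matrices $M_{(j_1,\dots,j_d)}$, hence lies in $\cC(M(\by))$; the rank-one tensors span $\ten_{n,d}(\F)$; and $L$ is linear; so $L(f)\in\cC(M(\by))$ for every tensor $f$, giving $\rk(L(f))\le\rk(\cC(M(\by)))\le r\cdot 2^d\cdot n^{\lfloor d/2\rfloor}$.

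I do not expect any single step to be a genuine obstacle, since all the technical work was done in Section~\ref{sec:matrix-decomp}. The only points requiring care are: (i) verifying that the parametrization really gives a \emph{set-multilinear} polynomial matrix (not merely a homogeneous one), as this is exactly what licenses the $n^{\lfloor d/2\rfloor}$ bound of Corollary~\ref{cor:dim-hom-sm-ten} over the weaker binomial bound of Corollary~\ref{cor:dim-hom-ten}; and (ii) tracking the multiplicative loss, which is the $2^d$ from Lemma~\ref{lem:poly-rank-set-multilinear} rather than the $d+1$ from Lemma~\ref{lem:poly-rank}, so that the final bound is $r\cdot 2^d\cdot n^{\lfloor d/2\rfloor}$ as claimed.
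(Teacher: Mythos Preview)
Your proposal is correct and follows essentially the same approach as the paper: parametrize rank-one tensors to obtain a set-multilinear polynomial matrix $M$, bound its symbolic rank by $r$ via Lemma~\ref{lem:generic-rank}, apply Lemma~\ref{lem:poly-rank-set-multilinear} for the $r\cdot 2^d$ decomposition, use Corollary~\ref{cor:dim-hom-sm-ten} for the $n^{\lfloor d/2\rfloor}$ bound on each summand, and conclude via the coefficient-space containment. The only cosmetic difference is that the paper names the parameter variables $\bx$ rather than $\by$.
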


\begin{proof}
	Let $\bx_1 \otimes \cdots \otimes \bx_d$ be a generic rank one tensor, where 
	$\bx_i = (x_{i1}, \ldots, x_{in})$, with
	$x_{ij}$ being variables which take values from $\F$, for all $i \in [d]$. Additionally, let
	$\bx = (\bx_1, \ldots, \bx_d)$, that is, $\bx$ is the set of all variables involved, taking 
	into account the partitions of the variables.
	As the map $L : \ten_{n, d}(\F) \to \mat_m(\F)$ is a linear map, we must have that 
	\begin{equation*}
		L(\bx_1 \otimes \cdots \otimes \bx_d) = 
		\sum_{i_1, i_2, \ldots, i_d = 1}^n A_{i_1, i_2, \ldots, i_d} \prod_{j=1}^d x_{j i_j} 
	\end{equation*}	  
	where each $A_{i_1, i_2, \ldots, i_d} \in \mat_m(\F)$ is a complex $m \times m$ matrix.\footnote{One 
	can see this by looking at
	the standard basis of the space $\ten_{n, d}(\F)$ given by tensoring the standard basis vectors 
	$\be_{i_1} \otimes \cdots \otimes \be_{i_d}$.} Hence, 
	$M(\bx) = L(\bx_1 \otimes \cdots \otimes \bx_d)$ is a matrix with 
	set-multilinear polynomial entries, 
	where each polynomial is set-multilinear over the sets of variables $\bx_1, \ldots, \bx_d$.
	
	By Lemma~\ref{lem:generic-rank} and the assumption that 
	$\rk(L(\bu_1 \otimes \cdots \otimes \bu_d)) \leq r$
	for any multiset of vectors $\bu_i \in \F^n$, we have that 
	$$ \rk_{\F(\bx)}(L(\bx_1 \otimes \cdots \otimes \bx_d)) \leq r. $$ 
	In this case, the conditions of Lemma~\ref{lem:poly-rank-set-multilinear} apply and
	therefore there exist $R \leq r \cdot 2^d$ vectors of homogeneous set-multilinear
	polynomials $\bf_i(\bx), \bg_i(\bx) \in \F[\bx]$ for which
	$$ M(\bx) = \sum_{i=1}^R \bf_i(\bx) \otimes \bg_i(\bx). $$
	Moreover, for all $i \in [R]$, 
	there exists a set $S_i$ such that $\bf_i(\bx)$ is 
	set-multilinear with respect to the partition $(\bx_j)_{j \in S_i}$ and $\bg_i(\bx)$
	is set-multilinear with respect to the partition $(\bx_j)_{j \in [d] \setminus S_i}$.
	Thus, $\deg(\bf_i) + \deg(\bg_i) \leq d$, which implies that 
	$\min(\deg(\bf_i), \deg(\bg_i)) \leq \lfloor d/2 \rfloor$, for each $i \in [R]$.
	This bound on the minimum degree, combined with Corollary~\ref{cor:dim-hom-sm-ten} and the fact that 
	$\bf_i(\bx)$ and $\bg_i(\bx)$ are set-multilinear, yield
	$$ \rk(\cC(\bf_i(\bx) \otimes \bg_i(\bx))) \leq n^{\lfloor d/2 \rfloor}. $$ 
	As $\dst \rk(\cC(M(\bx))) \leq \sum_{i=1}^R \rk(\cC(\bf_i(\bx) \otimes \bg_i(\bx)))$, we have that
	$$ \rk(\cC(M(\bx))) \leq R \cdot  n^{\lfloor d/2 \rfloor}. $$
	To finish the proof, it is enough to show that $L(f) \in \cC(M(\bx))$, for any $f \in \ten_{n, d}(\F)$. 
	
	For any rank one tensor $\bu_1 \otimes \cdots \otimes \bu_d$, we have that 
	$L(\bu_1 \otimes \cdots \otimes \bu_d) \in \cC(M(\bx))$,
	as $L(\bu_1 \otimes \cdots \otimes \bu_d) = M(\bu)$. As any element $f \in \ten_{n, d}(\F)$ can 
	be written as a linear combination of rank one tensors and by linearity of $L$, we have that 
	$$ L(f) \in \spn\left\{ L(\bu_1 \otimes \cdots \bu_d) \mid 
	\bu_1, \ldots, \bu_d \in \F^n  \right\} \subseteq \cC(M(\bx)).$$
	Thus, $L(f) \in \cC(M(\bx))$ and we have that 
	$$ \rk(L(f)) \leq \rk(\cC(M(\bx))) \leq R \cdot  n^{\lfloor d/2 \rfloor},$$
	as we wanted.
\end{proof}

The theorem above implies the following barrier on rank-based techniques.

\begin{corollary}\label{cor:tensor-rank}
	Let $m,n \in \N$ be positive integers and $L : \ten_{n, d}(\F) \to \mat_m(\F)$ be 
	a linear map (i.e., a flattening). Then, any rank methods which use this linear 
	map cannot prove lower bounds better than 
	$$\dst \rk(f) > 2^d \cdot n^{\lfloor d/2 \rfloor} $$
	for any tensor $f \in \ten_{n, d}(\F)$. 
\end{corollary}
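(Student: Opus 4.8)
The plan is to mirror, essentially verbatim, the argument already used for Corollary~\ref{cor:waring-rank}, now invoking Theorem~\ref{thm:tensor-rank} in place of Theorem~\ref{thm:waring-rank}. The underlying point is that a rank method built from a flattening $L$ certifies, for a tensor $f$, only the lower bound $\rk(f) \geq \rk(L(f))/r$, where $r$ is the largest rank that $L$ assigns to any rank-one tensor (this $r$ is exactly $\mu_L(S)$ against which the method's conclusion is normalized). So to establish the barrier it suffices to bound this ratio by $2^d \cdot n^{\lfloor d/2 \rfloor}$ uniformly, and that is precisely what Theorem~\ref{thm:tensor-rank} supplies.

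Concretely, I would first let $s = \rk(f)$, so that $f = \sum_{i=1}^{s} \bu_1^{(i)} \otimes \cdots \otimes \bu_d^{(i)}$ for suitable vectors $\bu_j^{(i)} \in \F^n$. Let $r$ be an upper bound on $\rk(L(\bu_1 \otimes \cdots \otimes \bu_d))$ over all rank-one tensors. Applying $L$ to the decomposition of $f$ and using linearity of $L$ together with the sub-additivity of matrix rank gives
\[
\rk(L(f)) \;\leq\; \sum_{i=1}^{s} \rk\left(L\left(\bu_1^{(i)} \otimes \cdots \otimes \bu_d^{(i)}\right)\right) \;\leq\; s\cdot r,
\]
hence $\rk(L(f))/r \leq s = \rk(f)$; in other words, no lower bound on $\rk(f)$ exceeding $\rk(L(f))/r$ can be read off from this rank method.

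It then remains only to bound $\rk(L(f))/r$ independently of $f$. Since the choice of $r$ means exactly that $\rk(L(\bu_1 \otimes \cdots \otimes \bu_d)) \leq r$ for every rank-one tensor, the hypothesis of Theorem~\ref{thm:tensor-rank} is met, and the theorem yields $\rk(L(f)) \leq r \cdot 2^d \cdot n^{\lfloor d/2 \rfloor}$ for every $f \in \ten_{n,d}(\F)$. Dividing by $r$ (which we may take nonzero, the case $L \equiv 0$ being vacuous) gives $\rk(L(f))/r \leq 2^d \cdot n^{\lfloor d/2 \rfloor}$, so the best lower bound on $\rk(f)$ provable via this flattening is at most $2^d \cdot n^{\lfloor d/2 \rfloor}$, which is the claim.

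There is essentially no obstacle here: this corollary is a bookkeeping consequence of Theorem~\ref{thm:tensor-rank}, and all the genuine content — passing from $L$ to the set-multilinear symbolic matrix $M(\bx)$, invoking Lemma~\ref{lem:generic-rank} and Lemma~\ref{lem:poly-rank-set-multilinear} to obtain a set-multilinear rank-$R$ decomposition with $R \leq r\cdot 2^d$, and then controlling $\rk(\cC(M(\bx)))$ via Corollary~\ref{cor:dim-hom-sm-ten} — has already been done inside the proof of the theorem. The only minor point to be careful about is the identification of the parameter $r$ appearing in the decomposition of $f$ with the $r$ hypothesized in Theorem~\ref{thm:tensor-rank}: both must refer to the maximum rank of $L$ over the simple set $S$ of rank-one tensors, and then the normalization $\rk(L(f))/r$ matches the quantity $c(\Delta_0^T)$ bounds.
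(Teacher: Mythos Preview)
Your proposal is correct and matches the paper's approach exactly: the paper does not write out a proof for Corollary~\ref{cor:tensor-rank}, stating only that it follows from Theorem~\ref{thm:tensor-rank}, but the intended argument is precisely the one you give, which mirrors verbatim the explicit proof of Corollary~\ref{cor:waring-rank} with Theorem~\ref{thm:tensor-rank} in place of Theorem~\ref{thm:waring-rank}.
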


\section{Approach to Prove Lower Bounds}\label{sec:approach}

Given our main results on barriers to the rank method, it is very interesting to determine for which circuit classes we {\em can} improve the state of the art in terms of lower bounds.
We suggest the class of depth-3 {\em non-homogeneous} formulas, and translate into this linear-algebraic framework what a lower bound proof will require. Needless to say, we have no idea if such a plan can work.

The works~\cite{K12, GKKS13} show that proving strong enough lower bounds 
(better than $\exp(\tilde{O}(\sqrt{n}))$ even for depth 3 formulas) would lead to a 
separation between $\cVP$ and $\cVNP$, so this is certainly a formidable challenge. 
The works~\cite{GKKS14,KLSS14, FLMS15, KS14, KS15} use rank methods to 
prove matching lower bounds of $\exp(\tilde{O}(\sqrt{n}))$ for homogeneous depth-4 
formulas. However, up to date, we do not know better lower bounds better than cubic for 
(non-homogeneous) depth-3 formulas~\cite{KST16}.

We lay out in the language of {\em rank methods} what it would take to prove stronger bounds. 
We will begin by restating the framework and then provide the exact requirement 
specialized to the case of depth-3 formulas. We end this section with a short informal 
discussion of why we feel that this approach is worth pursuing.

%

\paragraph{Restatement of the rank methods:} let $S \subset \hat{S}$ be a set of 
"simple" polynomials and let $\psi :\F^t \rightarrow \hat{S}$ be a polynomial mapping 
such that image of $\psi$ is $S$. That is, $\F^t$ is the space that parameterizes $S$. Let 
$\by = (y_1, \ldots, y_t)$ and define 
${\cal{L}}_S= \{\ell(\psi): \ell \in \hat{S}^*\} \subset \F[\by]$.  
Then, there exists a rank method which proves a lower bound of $R$ for the circuit class 
$\hat{S}$ if and only if there exists a matrix $L$ with entries from ${\cal{L}}$ such that 
$\rk_{\F(\by)}(L) \leq r$ and $\rk(\cC(L)) \geq rR$ for some $r$.

\paragraph{Instantiation for depth-3 formulas:} A depth-3 formula (not necessarily homogeneous) computing a homogeneous polynomial $f(\bx)$ of degree $d$ can be written as follows:
$$ f(\bx) = \sum_{i=1}^s \alpha_i \prod_{j=1}^D (1 + \ell_{ij}(\bx)), $$
where each $\alpha_i \in \F$, $\ell_{ij}$ is a linear form on the variables $\bx$ and 
$D$ is the maximum degree of a gate computed in the formula. Note that $D$ can be 
much larger than $d$\footnote{And indeed, making for non-homogeneous depth-3 circuits taking $D>d$ can yield exponential savings, as for computing symmetric polynomials.}. The size of
our formula is given by $s + D$, as we do not require all product gates to be the product
of $D$ affine forms. Thus, we want to obtain a lower bound for $s+D$.

Given this decomposition, the space of ``simple'' functions $S_D$ is the set containing 
each polynomial of the form 
$$ H_d\left[\prod_{i=1}^D (1+\ell_i(\bx))\right] = 
\Sym_d(\ell_1(\bx),\ell_2(\bx),\ldots \ell_D(\bx)),$$ 
where each $\ell_i$ is a linear form and $\Sym_d$ is 
the homogeneous multilinear symmetric polynomial.
Letting $\by=(y_{ij})$, where $i \in [D]$ and $j \in [n]$, our polynomial mapping 
$\psi: \F^{Dn} \rightarrow \F[\by][\bx]_d$ is defined as follows:   
\begin{equation}\label{eq:poly-mapping}
	\psi(\by)= H_{2d}\left[\prod_{i=1}^D \left(1+\sum_{j=1}^n(y_{ij} x_j) \right)\right]= 
	\Sym_d\left(\sum_{j=1}^n y_{1,j} x_j,\sum_{i=1}^n y_{2,j} x_j, \ldots, 
	\sum_{i=1}^n y_{D, j} x_j \right),
\end{equation} 
where we can identify $\F[\by][\bx]_d$ with the space $\F[\by]^{\binom{n+d-1}{n}}$
in which the coordinates are indexed by monomials of degree $d$ in $\bx$.
 
We claim that the space of possible entries of our matrix $L$, denoted as above by 
$\cal{L}=\{\ell(\psi) : \ell \in \F[\by][\bx]^*_d \}$, is equal to the space of {\em set 
symmetric multilinear} polynomials of degree $d$ in the variables $\by$, which we 
denote by $SSM(\by)$. Note that any $\ell(\psi)$ is a set multilinear function in
$\by = (\by_1, \ldots, \by_D)$, where  $\by_i=(y_{ij})_{j=1}^n$ and 
$\ell(\psi)$ is also {\em set symmetric} in the sense that 
$\ell(\psi)(\by_1,\ldots \by_D) = \ell(\psi)(\by_{\sigma(1)},\ldots, \by_{\sigma(D)})$ 
for every $\sigma \in S_D$ since $\Sym_{2d}$ (and therefore $\psi$ and $\ell(\psi)$) 
has this property. Hence, we have that $\cL \subseteq SSM(\by)$. 

To see that $SSM(\by) = \cL$ we only need to show that $\dim(\cL) = \dim(SSM(\by))$. 
To do so first we note that $\dim(\cal{L})$ is equal to the dimension of the space 
of homogeneous polynomials of degree $d$ over $\bx$, as we can take the standard linear
functionals corresponding to each coordinate of $\F[\by][\bx]_d$ (and the coefficients
in $\by$ of each coordinate are independent). 
Now, note that the space of homogeneous polynomials is isomorphic to the space of set 
symmetric multilinear polynomials by the following homomorphism.  
$$x_{i_1} x_{i_2}\ldots x_{i_{d}} \mapsto 
\sum_{\sigma\in S_n} \prod_{j=1}^d x_{\sigma(i_j)}.$$ 

With these facts in hand, we can now describe the instantiation in linear algebraic terms:

\begin{enumerate}
	\item Our set of simple polynomials is the set $S_D$ described above.
	\item Our polynomial mapping is the map $\psi$ from equation~\eqref{eq:poly-mapping}.
	\item One can take any linear map $L : \hat{S}_D \to \mat_m(\F)$ whose symbolic
	matrix $L(\psi)$ has only set symmetric multilinear polynomial entries. Let $r$
	be the rank of the symbolic matrix $L(\psi)$.
	\item To prove a lower bound of $R$ on the size of depth 3 formulas, one has to
	show that the rank of the space of matrices given by $L(\hat{S}_D)$ is at least
	$r \cdot R$.
\end{enumerate}

\paragraph{Potential value of this formulation:} First, we note that  {\em any} rank method  proving lower bounds for depth-3 formulas will have this form.  With such structured algebraic condition written explicitly, it could be easier to test different choices of mappings, and hopefully improve the state of the art for
depth 3 formulas. Of course, the main challenges are to try to understand the ranks of 
matrices with such special entries, as well as to try to find structured decompositions
of the symbolic matrices generated by this method.  So, while more concrete and algebraic 
than studying depth-3 formulas in general, these questions still seem formidable. 
Our hope that such a study will lead to 
either new lower bounds, or to a new barrier result for this model.

%

\section{Conclusion and Open Problems}\label{sec:conclusion}

In this paper, we prove the first unconditional barrier for a wide class of lower bound
techniques for tensor rank as well as the Waring rank of a polynomial. In particular,
for 3-dimensional tensor rank, we show for the first time that a wide class of 
techniques cannot improve a known linear lower bound (of $2n$) even beyond $8n$.  
Additionally, we provide an explicit instantiation of the rank method for depth-3 circuits, suggesting it will either help prove better
lower bounds, or help develop a barrier for this model that explains the difficulty of proving better lower bounds.  

We now provide a list of interesting directions for further research, both on the
computational side as well as on the mathematical side.

\begin{enumerate}
        \item Expand the set of methods for which {\em unconditional} barrier results be proven in arithmetic complexity theory, beyond the rank methods we study in this paper. In particular, can they be expanded to the use of {\em non-linear} mappings $L$, possibly of low degree?
        
	\item Expand the set of arithmetic models for which barriers can be established for rank methods, beyond the two models studied here.
	
	\item In some sense, rank methods ``flatten'' polynomials of degree $d>2$ into matrices (in $2$ dimensions), in a similar fashion flattening methods in algebraic geometry are used (for very similar purposes). Can this connection be further formalized and used? 
	
	\item It is not clear to us to which extent are the decomposition theorems used for the barrier results (for homogeneous and multilinear polynomials) are tight. While their cost is relatively low, their tightness we feel is of independent interest. For examples, the decomposition of Lemma~\ref{lem:poly-rank} shows us that any 
	homogeneous polynomial matrix $M(\bx)$ of degree $d$ and rank $r$ has homogeneous 
	rank bounded by $(d+1) \cdot r$. On the other hand, Derksen and Makam prove 
	in~\cite{DM16} that there exists a linear matrix $M(\bx)$ of rank $r$ whose 
	homogeneous rank is lower bounded by $(2-\epsilon) \cdot r$. It is therefore 
	an interesting problem to decide if our decomposition is actually tight.
	More precisely, we make the following conjecture: 

	\begin{conjecture}
		For every $d$ and for every $\eps$ there exist a matrix $M(\bx)$ of homogeneous 
		polynomials	of degree $d$ and rank $r$ such that $\hrk(M(\bx)) > (d+1-\eps) r$. 
	\end{conjecture}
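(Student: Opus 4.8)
This is an existence statement asking for a family witnessing near‑tightness of Lemma~\ref{lem:poly-rank}, so the plan splits into \emph{constructing} $M$ and \emph{certifying} that $\hrk(M)$ is large. The natural starting point is the degree‑$1$ case, where \cite{DM16} already exhibits a linear matrix $N(\bx)$ with $\rk_{\F(\bx)}(N)=\rho$ and $\hrk(N)\ge(2-\delta)\rho$ for any prescribed $\delta>0$ (taking $\rho$ large). The first step is therefore to \emph{amplify the degree}: to produce from $N$ — or from several copies of it on disjoint variable sets — a homogeneous degree‑$d$ matrix $M(\bx)$ whose symbolic rank is still under control (ideally a controlled function of $\rho$, or exactly a prescribed $r$) while the cost of any homogeneous decomposition grows by a factor tending to $d+1$. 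The candidate is a product‑type gadget: either the Kronecker product $N_1(\bx^{(1)})\otimes\cdots\otimes N_d(\bx^{(d)})$ of $d$ copies on disjoint variables (for which symbolic rank is multiplicative, giving $r=\rho^d$, with $(d+1)r$ the matching ceiling from Lemma~\ref{lem:poly-rank}), or a more rigid ``staircase'' built so that its only economical homogeneous decompositions are precisely the $d+1$ degree‑splits produced by the proof of Lemma~\ref{lem:poly-rank}.

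The second step is the lower bound $\hrk(M)\ge(d+1-\eps)r$. Given any homogeneous decomposition $M(\bx)=\sum_{i=1}^{s}\bu_i(\bx)\otimes\bv_i(\bx)$, one may assume each term is bi‑homogeneous with $\deg\bu_i+\deg\bv_i=d$, and partition the terms into $d+1$ \emph{strands} according to the value $k=\deg\bu_i\in\{0,\dots,d\}$, with index set $I_k$ for strand $k$. Writing $M^{(k)}$ for the partial sum over $I_k$, we have $\rk_{\F(\bx)}(M^{(k)})\le|I_k|$, so $s=\sum_k|I_k|\ge\sum_k\rk_{\F(\bx)}(M^{(k)})$, and it suffices to show that in \emph{every} such partition each strand must already carry $\gtrsim r$ of the symbolic rank — so that $\sum_k\rk(M^{(k)})\gtrsim(d+1)r$ even though $\rk(M)=r$ (which is possible only because the $M^{(k)}$ overlap heavily). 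For the product gadget this should follow by projecting onto the graded pieces of each $\bx^{(j)}$ and invoking the degree‑$1$ bound of \cite{DM16} inside each factor; for the staircase, by a coefficient‑space argument (Corollary~\ref{cor:dim-hom-ten}) run strand‑by‑strand against a target $M$ whose coefficient space is engineered to resist compression by the large middle strands.

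The main obstacle is the interaction between the two steps. The extreme strands $k=0,d$ are easy: their terms have $1$‑dimensional constant column (resp.\ row) space. But the middle strands $0<k<d$ have coefficient spaces as large as $\binom{n+\lfloor d/2\rfloor-1}{n-1}$, so a handful of middle‑strand terms can a priori rebuild a large part of $M$ and collapse $\sum_k\rk(M^{(k)})$ back toward $r$. Ruling this out is exactly what a \emph{generic} construction cannot do: a parameter count shows that a generic homogeneous degree‑$d$ matrix of symbolic rank $r$ has $\hrk$ only about $2r$, since the $(d+1)r$ rank‑one pieces in Lemma~\ref{lem:poly-rank} are assembled from just $2r$ underlying polynomial vectors rather than from $(d+1)r$ independent ones. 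Hence the conjecture must be witnessed by a very special, highly structured $M$ — the Kronecker and staircase gadgets are the first candidates, but genuinely new constructions may be required, perhaps of the apolarity/secant‑variety flavour used in the Waring‑rank literature. We regard designing such an $M$, rather than the subsequent bookkeeping, as the heart of the problem, and the reason it is posed here only as a conjecture.
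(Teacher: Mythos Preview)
The paper does not prove this statement: it is posed explicitly as an open \emph{conjecture} in Section~\ref{sec:conclusion}, with no accompanying argument. Consequently there is no ``paper's own proof'' to compare your proposal against.

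Your write-up is not a proof either, and you say as much in the final sentence. What you have produced is a research plan: amplify the Derksen--Makam degree-$1$ witness via Kronecker products or a staircase gadget, then lower-bound the homogeneous rank by partitioning any candidate decomposition into $d+1$ degree-strands and arguing each must carry close to $r$ of the symbolic rank. You correctly identify the central difficulty --- the middle strands $0<k<d$ have large coefficient spaces and can a priori absorb most of $M$ --- and you observe that generic matrices will not work (they have $\hrk\approx 2r$), so a highly structured construction is required. This diagnosis matches the paper's own framing: the conjecture is listed precisely because the gap between the $(2-\eps)r$ lower bound of \cite{DM16} and the $(d+1)r$ upper bound of Lemma~\ref{lem:poly-rank} is genuinely open, and the paper offers no candidate construction or proof strategy beyond stating the question.

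In short: there is no gap to name in your proposal because you are not claiming a proof; and there is nothing in the paper to compare against because the paper does not attempt one. Your sketch is a reasonable starting point for attacking the problem, but it remains a conjecture.
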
 
\end{enumerate}

\section*{Acknowledgments}

The authors would like to thanks Mrinal Kumar and Zeev Dvir for helpful discussions 
during the preparation of this manuscript.

Part of Ankit's research was done when the author was a student at Princeton University. Research supported by Mark Braverman's NSF grant CCF-1149888, Simons Collaboration on Algorithms and Geometry, Simons Fellowship in Theoretical Computer Science and Siebel Scholarship.

Part of Rafael's research was done when the author was a student at 
Princeton University.
Research supported by NSF CAREER award DMS-1451191, NSF grant CCF-1523816 and Siebel Scholarship.

Part of Klim's research was done when the author was a post-doc in Tel-Aviv University. Research was supported   from the European Community's Seventh Framework Programme (FP7/2007-2013) under grant agreement number 257575.

Avi's research was supported by NSF grant CCF-1412958.

\bibliographystyle{alpha}
\bibliography{refs}

\end{document}